\date{}
\newcommand{\be}{\begin{equation}}
\newcommand{\ee}{\end{equation}}
\def\la{\langle}
\def\ra{\rangle}
\def\R{\mathbb{R}}
\def\C{\mathbb{C}}
\renewcommand{\Re}{\text{{\rm Re}\;}}
\renewcommand{\Im}{\text{{\rm Im}\;}}
\newcommand{\dist}{\text{\rm dist}}
\newtheorem{theorem}{Theorem}[section]
\newtheorem{lemma}[theorem]{Lemma}
\newtheorem{proposition}[theorem]{Proposition}
\theoremstyle{definition}
\newtheorem{definition}[theorem]{Definition}
\newtheorem{remark}[theorem]{Remark}
\numberwithin{equation}{section}
\title{The spectrum of the cubic oscillator}
\author{
Vincenzo GRECCHI${}^1$ \&  
Andr\'e MARTINEZ$ {}^2$
 }
\begin{document}

\maketitle 
\addtocounter{footnote}{1}
\footnotetext{{\tt\small Universit\`a di Bologna,  
Dipartimento di Matematica, Piazza di Porta San Donato, 40126
Bologna, Italy, 
vincenzo.grecchi@unibo.it }\\Partly supported by Universit\`a di Bologna, Funds for Selected Research Topics}  
\addtocounter{footnote}{1}
\footnotetext{{\tt\small Universit\`a di Bologna,  
Dipartimento di Matematica, Piazza di Porta San Donato, 40126
Bologna, Italy, 
andre.martinez@unibo.it }\\Partly supported by Universit\`a di Bologna, Funds for Selected Research Topics}  
\begin{abstract}
We prove the simplicity and analyticity of the eigenvalues of the cubic oscillator Hamiltonian, 
$$
H(\beta)=-\frac{d^2}{dx^2}+x^2+i\sqrt{\beta}x^3,
$$
for $\beta$ in the cut plane $\C_c:=\C\backslash \R_-$. Moreover, we prove that the spectrum consists of the perturbative eigenvalues $\{E_n(\beta)\}_{n\geq 0}$ labeled by the constant number $n$ of  nodes of the corresponding eigenfunctions. In addition, for all $\beta\in\C_c$, $E_n(\beta)$ can be computed as the
Stieltjes-Pad\'e sum of its perturbation series at $\beta=0$. This also gives an alternative proof of the fact that the spectrum of $H(\beta)$ is real when $\beta $ is a positive number. This way, the main results on the repulsive PT-symmetric and on the attractive quartic oscillators are extended to the cubic case.
\end{abstract}  
\baselineskip = 12pt 
\vskip 8cm
{\it Keywords:} Cubic oscillator; Non-selfadjoint operators; Pad\'e approximants; Nodes of eigenfunctions; PT-symmetric operators.
\vskip 0.5cm
{\it Subject classifications:} 34L40; 81Q12; 34L15; 81Q20
\vfill\eject
\section{Introduction}
The real cubic oscillator,  \be H=-\frac{d^2}{dx^2}+x^2+x^3,\label{HQR}\ee
has been considered from the very beginnings of quantum mechanics 
as one of the simplest operator to study (see, e.g.,  \cite{DA}). Indeed, it is the simplest model after the two cases of a linear potential and a quadratic potential. But, in contrast with these two models (for which Airy and Weber functions can be used), no general analogous special functions are known for the cubic case (see,  however, \cite{CGM}).
In addition, from a quantum point of view, it appears that there exists an infinity
  of self-adjoint extensions from $C_0^\infty (\R)$, all  with discrete spectrum (see, e.g., \cite{RS} for such considerations). Therefore, this apparent simplicity actually hides important difficulties, for which a rigorous treatment is needed.\\

Here, in order to get a better understanding of the problem, we consider the complex Hamiltonian,
\be H(\beta)=-\frac{d^2}{dx^2}+x^2+i\sqrt{\beta}x^3,\label{THITCP}\ee
where $\beta\not= 0$ is a complex parameter with $|\arg \beta|<\pi$.
In particular, for positive $\beta$, $H(\beta)$ is  PT-symmetric, as the repulsive quartic oscillators defined and studied in \cite{BG}. \\

Strangely enough, until recently  there were rather few rigorous published   results concerning this operator. Among them, however,
we can quote the  analyticity of the eigenvalues for small $|\beta|$ and the Borel summability \cite{CGM}, extended to the boundary $\arg(\beta)=\pm\pi,$ as distributional Borel summability \cite{C}.\\

On the contrary, many papers have been devoted to the quartic anharmonic oscillator (see, e.g., \cite{BG, LM, LM1, Sim, Vo}), for which a rather complete series of results have been proved.\\

Indeed, a renewed interest for this model came only after the Bessis-Zinn Justin conjecture in 1992 (successively extended to other oscillators by Bender-Boettcher in \cite{BB}, after the results of \cite{BG} on the quartic oscillator), claiming that  the spectrum of $H(\beta)$ should be real for positive $\beta$.
After an important step due to Dorey-Dunning-Tateo in \cite{DDT}, the conjecture was finally proved by Shin in \cite{Sh} (actually for a more general class of Hamiltonians).\\  

In this paper,  we recover Shin's result  in our case, but  with a completely different proof. In addition, our proof permits us to give a complete information on the spectrum of $H(\beta)$ for all $\beta$ in the cut plane.\\

More precisely, we prove that the spectrum of $H(\beta)$ consists of the perturbative simple eigenvalues $\{E_n(\beta)\}_{n\geq 0}$, labeled by the constant number $n$ of  the nodes of the corresponding eigenfunctions, that is,  its zeros lying in the lower half-plane (that are stable in the $\beta =0$ limit).  
In addition, we prove that, for all $\beta$ such that $|\arg\beta|<\pi$, each eigenvalue $E_n(\beta)$ is the
Stjelties-Pad\'e sum of its perturbation series at $\beta=0$. In this way we extend to the cubic case the results by Loeffel-Martin-Simon-Wightman \cite{LM} on the quartic oscillator (see also \cite{Sim}).\\

Our method, that we believe to be new,  is based on a combined use of perturbative and complex semiclassical arguments. In particular, it exploits in an intensive way the stability of the nodes  of the eigenfunctions when the energy tends to infinity. More precisely we prove that, for any eigenfunction depending continuously on $\beta$, its number of nodes is indeed independent of $\beta$ and remains uniformly bounded. On the other hand,  semiclassical considerations show that this number is related to the value of the energy and, in particular, tends to infinity as the energy tends to infinity. This shows that the energy cannot explode for finite $\beta$ (even in the limit $\arg\beta\to \pm\pi$), and permits us to obtain global results. A similar strategy has been used by Loeffel and Martin \cite{LM1} for the anharmonic quartic oscillator, with the difference that, in their case, the particular form of the potential allowed the use of variational inequalities, leading to the result in a much simpler way.\\

Let us observe that the limit of $E_n(-b-i\epsilon)$ as $\epsilon\to 0^+,$ gives a generalized resonance of (\ref{HQR}) defined as the limit of resonances of analytically regularized Hamiltonians \cite{CM}.\\

A general discussion on the problem, extended to the non-modal solutions, has also been presented in two recent papers in collaboration with Marco Maioli \cite{GMM1,GMM2} (but the control on the non-perturbative levels was not complete).\\

Finally, in relation with our result, we would like to mention the complex semiclassical studies by Eremenko-Gabrielov-Shapiro \cite {EGS}, Delabaere-Pham \cite{DP1, DP2} and  Delabaere-Trinh \cite{DT} (see also \cite{Tr} and references therein), and the numerical computations done
by Bender-Weniger \cite{Be} and G. Alvarez \cite{Al}.\\

In the next section, we describe our main results and make some general remarks on them. Section 3 is devoted to some preliminaries results concerning any arbitrary eigenvalue $E(\beta)$ of $H(\beta)$, depending continuously on $\beta$ in some open set $\Omega\subset \C_c$. In particular, it is shown that the corresponding eigenfunction admits a finite number of zeros (nodes) in the half-plane $\{ \Im x <0\}$, and that this number is independent of $\beta$. Using this property, we show that $E(\beta)$ can be continued (as an eigenvalue of $H(\beta)$) along any simple path of $\C_c$. In Section 4, we consider the particular case where the path relates $\Omega$ with 0, and we prove that, when $\beta$ becomes close to 0, then $E(\beta)$ can be identified with some perturbative eigenvalue $E_n(\beta)$. The simplicity of the eigenvalues is proved in Section 5 where, still using the properties of the nodes of the eigenfunctions, we show that each $E_n(\beta)$ can be continued in a holomorphic way to the whole cut plane. Then, in order to have the Stieltjes property, we study the behavior of $E_n(\beta)$ both when $\beta$ becomes close to the cut and when $|\beta|$ tends to infinity . In particular, we show that $E_n(\beta)$ admits a finite limit $E_n^\pm (-b)$ as $\beta\to -b\in (-\infty, 0)$ with $\pm\arg\beta <\pi$ (Section 6), and that $E_n(\beta)$ behaves like $\beta^{1/5}$ as $|\beta|\to \infty$ (Section 7). This permits us, in Section 8, to prove that the function $\beta^{-1}(E_n(\beta)-E_n(0))$ is Stieltjes, and to complete the proof of our results.

\section{Results}

We study the spectrum of the non selfadjoint operator,
\be
H(\beta):= -\frac{d^2}{dx^2} + x^2 +i\sqrt{\beta}x^3
\ee
on $L^2(\R)$, for $\beta \in \C_c:=\C\backslash (-\infty , 0]$ and with domain,
$$
{\mathcal D}:=\{ u\in H^2(\R)\, ;\, x^3 u(x) \in L^2(\R)\},
$$
where $H^2(\R)$ stands for the usual Sobolev space of index 2 on $\R$.\\

It is well known (see, e.g., \cite{CGM}) that, for $\beta\in\C_c$, $H(\beta)$ forms an analytic family of type A (in the sense of Kato \cite{K}) with compact resolvent, and that, for any $\theta\in (-\pi, \pi)$, the operator $H(be^{i\theta})$ tends to $H_0:= -\frac{d}{dx^2} + x^2$ (with domain ${\mathcal D}_0:=\{ u\in H^2(\R)\, ;\, x^2 u(x) \in L^2(\R)\}$) in the norm-resolvent sense, as $b\to 0_+$.\\

Moreover, denoting by $E_n(0) = 2n+1$ ($n\geq 0$) the $(n+1)$-th eigenvalue of $H_0$, one can see (\cite{CGM}, Theorem 2.13 and \cite{C}) that, for all $n\geq 0$, there exists $\delta_n>0$ such that, for any $\beta\in \C_c$ with $|\beta| < \delta_n$, one has,
\be
\label{calic}
\sigma (H(\beta))\cap\{ |z-E_n(0)|<\delta_n\} = \{E_n(\beta)\},
\ee
where $E_n(\beta)$ is a simple eigenvalue depending analytically on $\beta$ and bounded on $ \C_c\cap \{|\beta| < \delta_n\}$. Finally, $E_n(\beta)$ admits an asymptotic expansion as $\beta\to 0$ ($\beta\in\C_c$), of the form,
\be
\label{calicbis}
E_n(\beta)\sim E_n(0) + \sum_{k\geq 1}e_{n,k} \beta^k,
\ee
where all the coefficients $e_{n,k}$ are real, are given by the Rayleigh-Schr\"odinger perturbation theory, and satisfy to the estimate,
\be
\label{anasymb}
|e_{n,k}|\leq D_n C_n^{k}\, k!\quad (k\geq 0),
\ee
where $C_n$ and $D_n$ are positive constants. (Actually, it is proved in \cite{CGM, C} that the previous series is Borel summable to $E_n(\beta)$ when $|\arg\beta |<\frac{3\pi}4-\varepsilon$, $\varepsilon >0$ arbitrary, $|\beta|$ small enough, and more generally, in the distributional sense for the other values of $\arg\beta$; however, here we will not use these results.)\\

Here, we prove,
\begin{theorem}\sl
\label{mainth}
For all $\beta\in\C_c$, the spectrum $\sigma (H(\beta))$of $H(\beta)$ consists  of simple eigenvalues $E_n (\beta)$ ($n=0,1,...$) depending analytically on $\beta \in \C_c$. Moreover, for all $\beta\in\C_c$ and $n\geq 0$, one has,
\be
E_n(\beta) = E_n (0) +\beta \lim_{j\to +\infty}\frac{P_{n,j}(\beta)}{Q_{n,j}(\beta)},
\ee
where $P_{n,j}$ and $Q_{n,j}$ are the polynomials of degree $j$ (``diagonal Pad\'e approximants'') defined by,  
$$
\left\{
\begin{array}{l}
Q_{n,j}(0)=1;\\
\left| \frac{P_{n,j}(\beta)}{Q_{n,j}(\beta)} -\sum_{k=1}^{2j}e_{n,k}\beta^k\right| ={\mathcal O}(|\beta|^{2j+1})\quad\mbox{as } \,\, \beta\to 0.
\end{array}\right.
$$
\end{theorem}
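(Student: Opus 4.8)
The plan is to establish the theorem in two logically distinct parts, reflecting the two claims it makes: (i) that the full spectrum of $H(\beta)$ is exhausted by the analytic family $\{E_n(\beta)\}_{n\geq 0}$ continued from $\beta=0$, and (ii) that each $E_n(\beta)$ is recovered by the diagonal Pad\'e approximants of its Rayleigh--Schr\"odinger series. For part (i), I would rely on the node-counting machinery announced for Sections 3--5: the key structural fact is that any eigenvalue depending continuously on $\beta$ has an eigenfunction whose number of nodes in $\{\Im x<0\}$ is finite and \emph{locally constant} in $\beta$. I would use this to analytically continue each perturbative germ $E_n(\beta)$, defined near $\beta=0$ by \eqref{calic}, along every path in the simply connected domain $\C_c$; the constancy of the node number prevents two distinct branches from colliding and forces the continuation to be single-valued and holomorphic on all of $\C_c$, giving simplicity and analyticity. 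The surjectivity onto the spectrum comes from the reverse direction: given any $\beta_0\in\C_c$ and any eigenvalue $E(\beta_0)$, I would continue it back toward $\beta=0$ and invoke the identification result of Section 4, which matches it with some $E_n$ via its node number, so that no non-perturbative eigenvalue can exist.

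For part (ii), the strategy is to verify the hypotheses of a Stieltjes--Pad\'e convergence theorem of the type used by Loeffel--Martin--Simon--Wightman for the quartic oscillator. The central object is the function
\be
f_n(\beta):=\beta^{-1}\bigl(E_n(\beta)-E_n(0)\bigr),
\ee
whose formal series $\sum_{k\geq 0}e_{n,k+1}\beta^{k}$ has real coefficients satisfying the factorial bound \eqref{anasymb}, hence is a series of Stieltjes type once one knows $f_n$ is the Cauchy transform of a positive measure on $\R_-$. I would obtain this representation from the analytic control announced in Sections 6 and 7: the existence of finite boundary limits $E_n^\pm(-b)$ as $\beta\to -b$ with $\pm\arg\beta<\pi$ supplies the jump across the cut $\R_-$ (whose positivity, via PT-symmetry, yields a positive spectral measure), while the growth estimate $E_n(\beta)\sim\beta^{1/5}$ as $|\beta|\to\infty$ controls the behaviour at infinity and guarantees the dispersion relation closes with no extra entire contribution. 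Together these give a Herglotz/Stieltjes integral representation for $f_n$ on $\C_c$.

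Once $f_n$ is known to be Stieltjes, the conclusion is a classical consequence of moment-problem theory: the diagonal Pad\'e approximants $P_{n,j}/Q_{n,j}$, normalized by $Q_{n,j}(0)=1$ and matching the series through order $2j$, converge to $f_n(\beta)$ locally uniformly on $\C_c$, which is exactly the displayed formula after multiplying by $\beta$ and adding $E_n(0)$. I would cite the determinacy of the associated Stieltjes moment problem---guaranteed by the Carleman-type growth permitted by \eqref{anasymb}---to ensure the Pad\'e sequence has a unique limit equal to $f_n$ rather than merely a convergent subsequence.

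The main obstacle I expect is not the Pad\'e machinery, which is standard once the Stieltjes property is in hand, but rather establishing that integral representation rigorously---specifically, proving that the boundary jump of $E_n$ across $\R_-$ defines a \emph{positive} measure and that the $\beta^{1/5}$ growth is genuinely subpolynomial enough to suppress any entire part in the dispersion relation. This is where the semiclassical estimates of Sections 6 and 7 must be invoked with care, since the naive Cauchy-integral argument requires control of $E_n(\beta)$ uniformly up to the cut and at infinity simultaneously, and the positivity of the spectral measure is precisely the delicate input that, in the end, also re-proves the reality of the spectrum for $\beta>0$.
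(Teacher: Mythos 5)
Your proposal is correct and follows essentially the same route as the paper's own proof: part (i) is the node-counting continuation argument of Sections 3--5 (local constancy of the node number, the WKB estimate ruling out energy blowup, identification near $\beta=0$, and the exactly-$n$-nodes lemma forcing simplicity), while part (ii) reproduces Sections 6--8, namely the boundary limits $E_n^{\pm}(-b)$, the $\beta^{1/5}$ growth bound killing the arc at infinity in the Cauchy-integral representation of $\beta^{-1}(E_n(\beta)-E_n(0))$, and Carleman determinacy of the moment problem combined with classical Stieltjes--Pad\'e convergence theory. One small precision: in the paper PT-symmetry only yields $E_n^{-}(-b)=\overline{E_n^{+}(-b)}$, so the positivity of the density $\rho_n(t)=\frac{1}{\pi}\Im E_n^{+}(-1/t)$ is obtained by a separate Wronskian/flux computation using the Sibuya asymptotics of the boundary eigenfunction --- exactly the delicate input your last paragraph correctly identifies.
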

\vskip 0.5cm

Actually, one point of the proof will consist in showing that the series $(|e_{n,k}|)_{k\geq 1}$ is Stieltjes (see, e.g., \cite{Wa} for a definition). In particular, its diagonal Pad\'e approximants  satisfying the condition $Q_{n,j}(0) =1$ are well defined and unique (see \cite{Wa}, Chap. XX). Indeed, we also prove,\\
\begin{theorem}\sl
\label{mainth2}
For all $\beta\in\C_c$, one has,
$$
E_n(\beta)=E_n(0) + \beta\int_0^{+\infty}\frac{\rho_n(t)}{1+\beta t} dt,
$$
where $\rho_n$ is a real-analytic positive function on $(0,+\infty)$,  such that  $t^{1/5}\rho_n(t)$ admits a limit as $t\to 0_+$, and,
$$
\ln \rho_n(t) = -\frac{8}{15}\, (t+{\mathcal O}(\ln t)),\quad \mbox{as } t\to +\infty.
$$
Moreover, the measure $\rho_n(t)dt$ is the only solution of the moment problem,
$$
\int_0^{+\infty} t^k\rho_n(t)dt =|e_{n, k+1}|\quad (k=0,1,\dots).
$$
In particular, the constant $C_n$ in (\ref{anasymb}) can be taken arbitrarily close to $15/8$.

\end{theorem}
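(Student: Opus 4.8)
The plan is to show that the function $g_n(\beta):=\beta^{-1}\bigl(E_n(\beta)-E_n(0)\bigr)$ is a Stieltjes transform, and then to read off every stated property of $\rho_n$ from the boundary behaviour of $E_n$ on the cut together with the results of Sections 6 and 7. By Theorem \ref{mainth}, $E_n$ is holomorphic on $\C_c$, and since $E_n(\beta)-E_n(0)=e_{n,1}\beta+{\mathcal O}(\beta^2)$ by (\ref{calicbis}), the origin is a removable singularity of $g_n$; thus $g_n$ is holomorphic on $\C_c$ and bounded near $0$. First I would set up a dispersion relation. By Section 7 one has $E_n(\beta)={\mathcal O}(|\beta|^{1/5})$, hence $g_n(\beta)={\mathcal O}(|\beta|^{-4/5})\to 0$ as $|\beta|\to\infty$, so the arc at infinity contributes nothing. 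By Section 6 the boundary values $E_n^\pm(-b)=\lim E_n(-b\pm i0)$ exist and are finite for $b>0$, and by Schwarz reflection (as $E_n$ is real on $(0,+\infty)$) they are conjugate, $E_n^-(-b)=\overline{E_n^+(-b)}$. Applying Cauchy's formula on a keyhole contour around $(-\infty,0]$, letting the outer circle expand and the slit shrink, I obtain
\[
g_n(\beta)=\frac{1}{2\pi i}\int_{-\infty}^{0}\frac{g_n(b'-i0)-g_n(b'+i0)}{b'-\beta}\,db'.
\]

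Writing $b'=-b$ and using $g_n(-b\pm i0)=\bigl(E_n(0)-E_n^\pm(-b)\bigr)/b$, the jump across the cut equals $2i\,\Im E_n^+(-b)/b$. Substituting $b=1/t$ then turns the dispersion integral into the Stieltjes form $g_n(\beta)=\int_0^{+\infty}\rho_n(t)(1+\beta t)^{-1}\,dt$ with
\[
\rho_n(t)=\tfrac1\pi\,\Im E_n^+(-1/t),\qquad t>0 .
\]
Real-analyticity of $\rho_n$ on $(0,+\infty)$ follows from the analytic dependence of the generalized resonance $E_n^+(-b)$ on $b$ away from the endpoints. Positivity reduces to showing $\Im E_n^+(-b)\ge 0$; this is exactly the point where the description of $E_n^\pm(-b)$ through the nodes of the eigenfunction (Sections 3 and 6) is used, and I will check below that it is consistent at both ends of the cut. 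Expanding $(1+\beta t)^{-1}$ in the representation and matching with the asymptotic series (\ref{calicbis}) yields $\int_0^{+\infty}t^k\rho_n(t)\,dt=(-1)^k e_{n,k+1}$; since the left-hand side is nonnegative, this simultaneously proves the sign alternation $(-1)^k e_{n,k+1}\ge0$ and the moment identity $\int_0^{+\infty}t^k\rho_n(t)\,dt=|e_{n,k+1}|$.

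The two asymptotic regimes come from the two ends of the cut. As $t\to0_+$ we have $b=1/t\to+\infty$, and Section 7 gives $E_n(\beta)\sim c_n\beta^{1/5}$ with $c_n>0$; on the upper lip $\beta=be^{i\pi}$, so $\Im E_n^+(-b)\sim c_n\sin(\pi/5)\,b^{1/5}>0$, whence $t^{1/5}\rho_n(t)\to c_n\sin(\pi/5)/\pi$ (and positivity holds near this end). As $t\to+\infty$ we have $b=1/t\to0_+$, and $\Im E_n^+(-b)$ is the width of a generalized resonance of $-d^2/dx^2+x^2-\sqrt b\,x^3$. Rescaling $x=b^{-1/2}y$ turns the eigenvalue problem into the semiclassical one $-b^2\psi''+(y^2-y^3)\psi=bE\,\psi$ with effective parameter $\hbar=b$ and energy $bE_n^+(-b)\to0$, so the width is governed by the barrier-penetration factor $e^{-2S_0/b}$ with
\[
S_0=\int_0^1 y\sqrt{1-y}\,dy=\tfrac{2}{3}-\tfrac{2}{5}=\tfrac{4}{15};
\]
hence $\ln\rho_n(t)=-\tfrac{8}{15}\bigl(t+{\mathcal O}(\ln t)\bigr)$ as $t\to+\infty$, the $\mathcal O(\ln t)$ absorbing the WKB prefactor. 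I expect this complex-WKB evaluation of the resonance width — establishing not just finiteness of $E_n^+(-b)$ but the exact exponential rate $8/15$ as $b\to0_+$ — to be the main obstacle.

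It remains to record the consequences for the moment problem. The bound (\ref{anasymb}) gives $\mu_k:=|e_{n,k+1}|\le D_nC_n^{k+1}(k+1)!$, so $\mu_k^{-1/(2k)}\ge \mathrm{const}/\sqrt k$ and $\sum_k\mu_k^{-1/(2k)}=+\infty$; by Carleman's criterion the Stieltjes moment problem is determinate, so $\rho_n(t)\,dt$ is its unique solution and the diagonal Pad\'e approximants of Theorem \ref{mainth} are well defined. Finally, inserting $\rho_n(t)\sim t^{\alpha}e^{-8t/15}$ into $\mu_k=\int_0^{+\infty}t^k\rho_n(t)\,dt$ and applying Watson's lemma gives $\mu_k\sim \mathrm{const}\cdot(15/8)^{k}\,k!$, which shows that the constant $C_n$ in (\ref{anasymb}) can be taken arbitrarily close to (but not below) $15/8$, completing the proof.
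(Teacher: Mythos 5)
Your proposal is correct and follows essentially the same route as the paper's Section 8: a Cauchy/dispersion argument in which the arc at infinity is killed by the ${\mathcal O}(|\beta|^{1/5})$ growth of Proposition \ref{behavinfty}, the jump across the cut is expressed via Proposition \ref{behavcut} and the reflection symmetry as $\rho_n(t)=\frac1\pi\Im E_n^+(-1/t)>0$, the moments are identified with $|e_{n,k+1}|$ and determinacy follows from Carleman's criterion, and the large-$t$ decay comes from the tunneling action $2\int_0^1\sqrt{y^2-y^3}\,dy=8/15$ after the rescaling $x=b^{-1/2}y$. The one step you flag as the main obstacle --- the precise exponentially small width $\Im E_n^+(-b)=p_nb^{-q_n}e^{-8/(15b)}(1+o(1))$ as $b\to 0_+$, not just its exponent --- is handled in the paper not by a new argument but by citing the general Harrell--Simon theory of shape resonances \cite{HS}, applied to exactly the semiclassical problem you set up.
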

\vskip 0.3cm

\begin{remark}\sl By the results of Sibuya \cite{Si}, we already know that, for $\beta\in\C_c$ fixed, the large enough eigenvalues of $H(\beta)$ are simple. However, no information is given in \cite{Si} about how much large they must be (in particular when $\beta$ becomes close to infinity or to the boundary of $\C_c$). Moreover, still by \cite{Si}, one can see that any eigenspace Ker$(H(\beta) -E)$ is one dimensional (that is, the geometric multiplicity of $E$ is one). Therefore, the possible (algebraic) multiplicity of an eigenvalue necessarily means the appearance of non vanishing Jordan blocks.
\end{remark}
\begin{remark}\sl
When $\beta$ is a positive number, the operator $H(\beta)$ is PT-symmetric and the reality of its spectrum 
is proved by Shin in \cite{Sh}. Here, we recover this result by observing that, since all the $e_{n,k}$ are real, so are all the coefficients of $P_{n,j}$ and $Q_{n,j}$. Actually, the reality of $E_n(\beta)$ is indeed a consequence of its simplicity which, in turns, comes out from a good control on the nodes of the associated eigenfunction (see Definition \ref{defnode} and Proposition \ref{nbrenode}).
\end{remark}

\vskip 0.3cm

\begin{remark}\sl
In particular, it follows from this theorem that all the eigenvalues of $H(\beta)$ are  holomorphic extensions to the whole cut plane $\C_c$ of the perturbative eigenvalues computed near $\beta =0$.
\end{remark}

\section{Preliminaries}\label{prelim}
We start with some general results concerning the eigenvalues of $H(\beta)$.\\

By the results of \cite{K} and \cite{Si}, we know that, if $E_0$ is an eigenvalue of $H(\beta_0)$ for some $\beta_0\in\C_c$, then its multiplicity $m$ is finite and there exits an integer $m'\in [1, m]$ and a $m'$-valued
analytic function $\beta\mapsto E(\beta)$ defined on a neighborhood of $\beta_0$ with branch point at $\beta_0$,  such that $E(\beta)\rightarrow E_0$ as $\beta\to\beta_0$, and $E(\beta)$ is an eigenvalue of $H(\beta)$ for all $\beta$ in this neighborhood.\\

In other words, the function $\beta \mapsto E(\beta)$ is well-defined and analytic on the $m'$-sheets covering space over $\Omega\backslash\{\beta_0\}$, where $\Omega$ is a neighborhood of $\beta_0$ in $\C_c$, and moreover it is continuous at $\beta_0$. \\

 Let us note that, by the results of \cite{Si}, the algebraic multiplicity of $E_0$ coincides with its multiplicity as a zero of the Stokes-Sibuya coefficient (see, e. g., \cite{Tr}). In particular, the finitude of $m$ is a direct consequence of the global analyticity of this coefficient with respect to the energy.\\

The purpose of  this section is to continue holomorphicaly any of the branches of $E(\beta)$, in order to reach any point of $\C_c$ (in particular, points arbitrarily close to 0). At first, we prove,
\begin{proposition}\sl
\label{contfonctpropre}
There exists a $L^2(\R)$-multiple-valued continuous function $\psi_\beta$ of $\beta\in \Omega$,  such that, for all $\beta\in \Omega$, one has  $H(\beta)\psi_\beta = E(\beta)\psi_\beta$ and $\Vert \psi_\beta\Vert =1$.
\end{proposition}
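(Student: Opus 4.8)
The plan is to build $\psi_\beta$ directly from the ordinary differential equation rather than from abstract spectral projections, so as to sidestep the difficulties caused by the possible Jordan blocks at the branch point $\beta_0$. Recall that $H(\beta)\psi=E\psi$ is equivalent to the second order equation $\psi''=(i\sqrt\beta\,x^3+x^2-E)\psi$, to which the theory of Sibuya \cite{Si} applies. For $(E,\beta)$ in a neighbourhood of $(E_0,\beta_0)$, this theory provides two distinguished recessive solutions $y_+(x;E,\beta)$ and $y_-(x;E,\beta)$, subdominant as $x\to+\infty$ and as $x\to-\infty$ respectively, that are nonzero, depend analytically on $(E,\beta)$ (entire in $E$), and obey asymptotic estimates uniform for $(E,\beta)$ in compact sets. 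In particular $E$ is an eigenvalue of $H(\beta)$ exactly when $y_+$ and $y_-$ are proportional, that is, when the associated Stokes--Sibuya coefficient (see \cite{Si,Tr}) vanishes; in that case either of the $y_\pm$ is, by the one-dimensionality of the eigenspaces recalled above (geometric multiplicity one), the eigenfunction.

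Accordingly I would set, for $\beta\in\Omega$,
\[
\tilde\psi_\beta:=y_+(\cdot\,;E(\beta),\beta).
\]
Since $E(\beta)$ is an eigenvalue, $\tilde\psi_\beta$ is then also subdominant at $-\infty$ (being proportional to $y_-$), hence decays at both ends and lies in $\mathcal D\subset L^2(\R)$; being a nontrivial solution it satisfies $H(\beta)\tilde\psi_\beta=E(\beta)\tilde\psi_\beta$ with $\tilde\psi_\beta\neq 0$. Because $E(\beta)$ is $m'$-valued over $\Omega\setminus\{\beta_0\}$, so is $\tilde\psi_\beta$; on each sheet it is analytic in $\beta$, as the composition of the analytic map $(E,\beta)\mapsto y_+(\cdot\,;E,\beta)$ with the chosen analytic branch of $E(\beta)$.

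The remaining point is continuity in $L^2(\R)$, including at $\beta_0$. Here I would use the uniform Sibuya asymptotics to produce, on a neighbourhood of any given point, a fixed integrable majorant for $|\tilde\psi_\beta(x)|^2$; pointwise continuity of $y_+$ in $(E,\beta)$ together with dominated convergence then yields $\|\tilde\psi_{\beta'}-\tilde\psi_\beta\|\to 0$. At $\beta_0$ in particular, each of the $m'$ branches of $E(\beta)$ tends to $E_0$, so every branch of $\tilde\psi_\beta$ converges in $L^2$ to the single function $y_+(\cdot\,;E_0,\beta_0)$, i.e.\ to the eigenfunction of $H(\beta_0)$, which is precisely the continuity required at the branch point. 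Finally, since $\tilde\psi_\beta\neq 0$ and $\beta\mapsto\tilde\psi_\beta$ is $L^2$-continuous, the map $\beta\mapsto\|\tilde\psi_\beta\|$ is continuous and strictly positive, so it suffices to set
\[
\psi_\beta:=\frac{\tilde\psi_\beta}{\|\tilde\psi_\beta\|}.
\]

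I expect the main obstacle to be exactly the behaviour at the branch point $\beta_0$: there the algebraic multiplicity of $E_0$ may exceed one (the Jordan-block phenomenon noted in the Remark), so the naive route through the rank-one Riesz projection around $E(\beta)$ breaks down, since that projection need not have constant rank as $\beta\to\beta_0$ and in general produces generalized rather than genuine eigenvectors. The virtue of the ODE construction is that $y_+(\cdot\,;E,\beta)$ is regular in $(E,\beta)$ irrespective of any spectral degeneracy: the only multivaluedness and the only loss of smoothness come from $E(\beta)$ itself, which nevertheless stays continuous up to $\beta_0$, and this is enough to carry the eigenfunction continuously through the branch point.
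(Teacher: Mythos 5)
Your construction is essentially the paper's own proof: both take the Sibuya subdominant solution $\psi_{+,E}$, jointly analytic in $(E,\beta)$ and $L^2(\R_+)$-valued, evaluate it at the continuous (multi-valued) $E(\beta)$, and normalize, so that the only multivaluedness comes from $E(\beta)$ itself and the branch point at $\beta_0$ causes no trouble. The one step you leave implicit --- the integrable majorant on $\R_-$, where $y_+$ is controlled only through the proportionality $y_+=\alpha_\beta y_-$ and hence requires local boundedness/continuity of the connection coefficient $\alpha_\beta$ --- is exactly what the paper supplies explicitly, by writing $\alpha_\beta=\la\psi_{+,E(\beta)},\psi_{-,E(\beta)}\ra_{L^2(\omega)}/\Vert\psi_{-,E(\beta)}\Vert_{L^2(\omega)}^2$ for a fixed bounded open $\omega$ (legitimate since $\psi_{-,E}$ is a nonzero entire function) and then splitting $\Vert\psi_{+,E(\beta)}\Vert_{L^2(\R)}^2=\Vert\psi_{+,E(\beta)}\Vert_{\R_+}^2+|\alpha_\beta|^2\Vert\psi_{-,E(\beta)}\Vert_{\R_-}^2$, so your argument is correct once that one line is added.
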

\begin{proof} For $E\in\C$, denote by $\psi_{{\pm},E}$ the two solutions of $H(\beta)\psi_{{\pm},E}= E\psi_{{\pm},E}$ that are subdominant near $\R_\pm$ respectively, and are given by the Sibuya asymptotics (see \cite{Si}, Chapter 2). In particular,  $\psi_{{\pm},E}$ are $L^2(\R_\pm)$-valued analytic functions of $(E,\beta)$, and the fact that $E(\beta)$ is an eigenvalue of $H(\beta)$ means that there exists a complex number $\alpha_\beta$ such that $\psi_{+,E(\beta)} = \alpha_\beta \psi_{-,E(\beta)}$. Moreover, since $\psi_{-,E}(z)$ is a non identically zero entire function  of $z$, for any nonempty open set $\omega\subset\C$, one has $\Vert \psi_{-,E}\Vert_{L^2(\omega)} \not=0$.
Thus, one can write $\alpha_\beta =\la\psi_{+,E(\beta)},\psi_{-,E}\ra_{L^2(\omega)}/\Vert \psi_{-,E}\Vert_{L^2(\omega)}^2$, showing that $\alpha_\beta$ is a continuous  function of $\beta$. As a consequence, so is $\Vert \psi_{+,E(\beta)}\Vert^2 = \Vert \psi_{+,E(\beta)}\Vert^2_{\R_+}+|\alpha_\beta|^2\Vert \psi_{-,E(\beta)}\Vert^2_{\R_-}$, and the function $\psi_\beta:= \psi_{+,E(\beta)}/\Vert \psi_{+,E(\beta)}\Vert$ solves the problem.
\end{proof}
\begin{remark}\sl Actually, by the results of \cite{Si}, we know that $\alpha_\beta$ can be expressed as $\alpha_\beta= C(\sqrt{\beta}, E(\beta))$, where the function $\C^2\ni (\alpha, E)\mapsto C(\alpha, E)$ is entire. 
\end{remark}

Next, we prove,
\begin{proposition}\sl
\label{zero1}
For all $\beta\in\Omega$, the eigenfunction $\psi_\beta$ does not admit any zero in the strip,
$$
S_\beta:= \left\{ x\in\C\, ;\, 0\leq \Im x \leq \frac 2{3\sqrt{|\beta|}}\cos(\frac{{\rm arg}\beta}2)\right\}.
$$
\end{proposition}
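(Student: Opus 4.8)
The plan is to use the fact that, at an eigenvalue $E=E(\beta)$, the normalized eigenfunction $\psi_\beta=\psi_{+,E(\beta)}/\Vert\psi_{+,E(\beta)}\Vert$ is not merely subdominant near $\R_+$ but, because of the eigenvalue condition $\psi_{+,E(\beta)}=\alpha_\beta\psi_{-,E(\beta)}$, is simultaneously subdominant near $\R_-$; hence $\psi_\beta$ and $\psi_\beta'$ decay faster than any power along both subdominant directions, by the Sibuya asymptotics. The strategy is then to show that $S_\beta$ is a ``classically forbidden'' strip in which a recessive solution cannot vanish, by counting the zeros of $\psi_\beta$ in the open strip $\{0<\Im x<h\}$, $h:=\frac{2}{3\sqrt{|\beta|}}\cos(\frac{\arg\beta}{2})$, via the argument principle $N=\frac{1}{2\pi i}\oint \psi_\beta'/\psi_\beta\,dx$, and proving $N=0$ (the closed strip then follows by continuity in $b$).

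The first step is to collapse the contour to its two horizontal edges $\Im x=0$ and $\Im x=h$: the vertical pieces at $\Re x=\pm\infty$ contribute nothing because $\psi_\beta$ and $\psi_\beta'$ decay there. This is precisely where the height $h$ enters, since it must be chosen so that both endpoints of every horizontal segment $\{\Im x=b,\ 0\le b\le h\}$ remain inside the two subdominant sectors coming from the Sibuya asymptotics near $\pm\infty$; beyond $h$ an anti-Stokes line is crossed and zeros may appear. After this reduction, $N=\frac{1}{2\pi}\bigl(\Delta_0-\Delta_h\bigr)$, where $\Delta_b=\int_\R \Im\bigl(\psi_\beta'/\psi_\beta\bigr)(t+ib)\,dt$ is the total variation of $\arg\psi_\beta$ along the edge at height $b$.

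The engine is then to control the variation of $\Delta_b$ as $b$ runs from $0$ to $h$, via the ``current'' $g_b(t)=\Im\bigl(\overline{\psi_\beta(t+ib)}\,\psi_\beta'(t+ib)\bigr)$, whose derivative along the line is $g_b'(t)=|\psi_\beta(t+ib)|^2\,\Im\bigl(x^2+i\sqrt{\beta}x^3-E\bigr)$. The crux is the sign of $\Im\bigl(x^2+i\sqrt{\beta}x^3-E\bigr)$ in the strip. Writing $x=t+is$ and $\sqrt{\beta}=|\beta|^{1/2}e^{i\theta/2}$ with $\theta=\arg\beta$, I would expand $x^2+i\sqrt{\beta}x^3$ and show that the obstruction to oscillation persists up to the first height at which $\partial_s\,\Re\bigl(x^2+i\sqrt{\beta}x^3\bigr)$ changes sign; on the imaginary axis this threshold is $s=\frac{2}{3\sqrt{|\beta|}}\cos(\frac{\theta}{2})$, which is exactly the source of the constant $\frac{2}{3}$ and of the factor $\cos(\frac{\theta}{2})$ in the statement. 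This forces $\Delta_b$ to be constant on $[0,h]$, whence $N=0$.

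The main obstacle is that the eigenvalue $E=E(\beta)$ is not controlled a priori — neither its location nor, before the present circle of ideas is complete, its reality — so the additive constant $-E$ enters $\Im(V-E)$ and spoils any naive single-current monotonicity argument, since $\int_{\text{half line}} g_b'\,dt=0$ only determines a weighted average of $\Im(V)$ rather than a sign. I expect this to be resolved precisely by the \emph{difference} structure $N=\frac{1}{2\pi}(\Delta_0-\Delta_h)$, in which the contribution of the constant $-E$ to the two edges should cancel, so that only the $E$-independent part of the potential governs the sign. A secondary but genuine technical point is to make the Sibuya decay at the tilted endpoints quantitative and uniform enough both to kill the vertical contributions and to justify the constancy of the asymptotic phases; this quantitative control of the subdominant sectors is also what ultimately pins the admissible height at exactly $h$.
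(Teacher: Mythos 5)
There is a genuine gap, and it sits exactly where you flagged your own worry: the unknown eigenvalue $E$. Your engine is to prove $N=\frac1{2\pi}(\Delta_0-\Delta_h)=0$ by showing that $\Delta_b$ is constant on $[0,h]$, hoping that in the difference of the two edge integrals the contribution of the constant $-E$ cancels. Neither half of this works. First, constancy of $\Delta_b$ in $b$ is \emph{equivalent} to the absence of zeros at intermediate heights ($\Delta_b$ is a step function of $b$ that jumps by $2\pi$ times the number of zeros crossed), so establishing it is the whole theorem and cannot be an intermediate device unless you supply an independent mechanism. Second, the hoped-for cancellation fails: in the identity $g_b'(t)=|\psi_\beta(t+ib)|^2\,\Im\bigl(V_\beta(t+ib)-E\bigr)$ the term $\Im E$ is weighted by $|\psi_\beta|^2$, a different and uncontrolled weight on each edge, so subtracting the two edges cancels nothing. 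The paper neutralizes $E$ pointwise, line by line, and this is the idea your sketch is missing: with $\beta=be^{i\theta}$, one computes $\frac{d}{dr}\Im V_\beta(r+it)=3r^2\sqrt b\cos\frac\theta2-6rt\sqrt b\sin\frac\theta2+2t-3t^2\sqrt b\cos\frac\theta2$, a quadratic in $r$ whose reduced discriminant $3t\sqrt b\,(3\sqrt b\,t-2\cos\frac\theta2)$ is $\leq 0$ precisely for $0\leq t\leq \frac2{3\sqrt b}\cos\frac\theta2$. Hence on each such horizontal line $r\mapsto \Im V_\beta(r+it)$ is \emph{monotone}, so $r\mapsto\Im(V_\beta(r+it)-E)$ changes sign at most once on $\R$ \emph{whatever} the complex constant $E$ is. Then no argument principle is needed: if $\psi_\beta(s+it)=0$, the current vanishes there, and the two representations $\Im \psi_\beta'\overline{\psi_\beta}(s+it)=\int_{-\infty}^s\Im(V_\beta-E)|\psi_\beta|^2\,dr=-\int_s^{+\infty}\Im(V_\beta-E)|\psi_\beta|^2\,dr$ show both half-line integrals are zero; but the unique sign change of $\Im(V_\beta-E)$ lies on one side of $s$, so on the other half-line the integrand has a fixed sign and is not identically zero (it is a nonconstant cubic in $r$ times $|\psi_\beta|^2$, both with isolated zeros) --- contradiction. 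Your current-plus-half-line setup contains all the ingredients for this; you simply did not see that the at-most-one-sign-change property, rather than a sign of $\Im(V_\beta-E)$, is what the geometry delivers, and that it kills $E$ outright.

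Three secondary but real defects. (i) Your passage from the open strip to the closed one via ``continuity in $b$'' does not go through: monotonicity fails for $t<0$ (the discriminant becomes positive), so you cannot enlarge the strip downward, and a zero on either edge $\Im x=0$ or $\Im x=h$ invalidates the contour count itself; the pointwise argument covers $t=0$ and $t=h$ directly. (ii) Your identification of the threshold is off: on the imaginary axis ($r=0$) the relevant coefficient $2t-3t^2\sqrt b\cos\frac\theta2$ changes sign at $t=\frac{2}{3\sqrt b\,\cos(\theta/2)}$ (a factor $1/\cos$, not $\cos$); the correct height $\frac2{3\sqrt b}\cos\frac\theta2$ comes from the discriminant, i.e.\ from the vertex of the parabola at $r=t\tan\frac\theta2$, not from the axis $r=0$. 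Relatedly, the height $h$ has nothing to do with crossing an anti-Stokes line or leaving the Sibuya sectors; it is dictated by the potential's geometry alone. (iii) The quantitative, uniform Sibuya control you anticipate needing at the tilted endpoints is unnecessary in the pointwise version: one only needs $\Im\psi_\beta'\overline{\psi_\beta}\to0$ and integrability of $\Im(V_\beta-E)|\psi_\beta|^2$ along each horizontal line of the strip, which the standard subdominant asymptotics in the two sectors already provide.
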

\begin{proof} Setting $V_\beta(x):= x^2 +i\sqrt{\beta}x^3$, and using the equation, we see that, for any $s,t\in\R$, we have,
\begin{eqnarray}
\label{psi'psi}
\Im \psi'_\beta (s+it)\overline{\psi_\beta (s+it)} &=& \int_{-\infty}^s \Im (V_\beta (r+it)-E(\beta))|\psi_\beta (r+it)|^2 dr\nonumber\\
&=&- \int_s^{+\infty} \Im (V_\beta (r+it)-E(\beta))|\psi_\beta (r+it)|^2 dr.
\end{eqnarray}
Therefore, it is enough to prove that, for $0\leq t\leq \frac 2{3\sqrt{|\beta|}}\cos(\frac{{\rm arg}\beta}2)$, the function $r \mapsto \Im (V_\beta (r+it)-E(\beta))$ changes sign at most once on $\R$ (in that case, at least one -- and thus both -- of the two previous integrals is necessarily different from zero). Writing $\beta =be^{i\theta}$ with $b>0$ and $|\theta|<\pi$, we compute,
\begin{eqnarray*}
\frac{d}{dr}\Im V_\beta (r+it)&=&\frac{d}{dr}\left(2rt +r^3\sqrt{b}\cos\frac{\theta}2 -3r^2t\sqrt{b}\sin\frac{\theta}2-3rt^2\sqrt{b}\cos\frac{\theta}2 \right)\\
&=& 3r^2\sqrt{b}\cos\frac{\theta}2-6rt\sqrt{b}\sin\frac{\theta}2+2t - 3t^2\sqrt{b}\cos\frac{\theta}2.
\end{eqnarray*}
This is a second order polynomial function of $r$, with (reduced) discriminant given by,
$$
\Delta' = 9t^2b-6t\sqrt b\cos\frac{\theta}2=3t\sqrt{b}(3\sqrt{b} t - 2\cos\frac{\theta}2).
$$
In particular, this discriminant is non positive when $0\leq t\leq \frac2{3\sqrt b}\cos\frac{\theta}2$, and the result follows.
\end{proof}
\begin{proposition}\sl
\label{zero2}
There exists $C(\beta)>0$, depending continuously on $\beta\in\Omega$, such that the eigenfunction $\psi_\beta$ does not admit any zero in the set,
$$
\{ x\in \C \, ;\, \Im x <0\, ,\, |x| \geq C(\beta)\}.
$$
\end{proposition}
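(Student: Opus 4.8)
The plan is to derive this from the precise, connection-free asymptotics of the subdominant solutions provided by Sibuya's theory \cite{Si}, the essential input being that the eigenfunction $\psi_\beta$ is subdominant in \emph{two} of the Stokes sectors at infinity, not just one. First I would analyse the Stokes geometry at infinity. Since $V_\beta(x)=x^2+i\sqrt\beta x^3$ is dominated by its cubic term as $|x|\to\infty$, the relevant WKB phase is $\int^x\sqrt{V_\beta(t)-E(\beta)}\,dt\sim\frac{2}{5}(i\sqrt\beta)^{1/2}x^{5/2}$, whose real part changes sign across five equally spaced rays. These rays cut a neighbourhood of infinity into five Stokes sectors $U_0,\dots,U_4$ (numbered cyclically) of opening $2\pi/5$, each carrying one recessive direction. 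Writing $\beta=be^{i\theta}$, the whole configuration rotates rigidly with $\theta$, and an explicit computation of the recessive directions shows that, for every $\theta\in(-\pi,\pi)$, the ray $\R_+$ lies in one sector, say $U_0$, while $\R_-$ lies in the sector $U_3$ obtained from $U_0$ by a rotation of $6\pi/5$.

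Next I would feed in Sibuya's asymptotics. The solution $\psi_{+,E}$ subdominant near $\R_+$ admits a uniform expansion $\psi_{+,E}\sim p(x)\exp\bigl(-\int^x\sqrt{V_\beta-E}\bigr)$, with nonvanishing prefactor $p$, throughout the sector of half-opening $3\pi/5$ centred on its recessive direction, that is, throughout $U_4\cup U_0\cup U_1$; likewise $\psi_{-,E}$ has such an expansion throughout $U_2\cup U_3\cup U_4$. In each case the leading term has no zero, so $\psi_{+,E}$ (resp. $\psi_{-,E}$) is zero-free for $|x|$ large in $U_4\cup U_0\cup U_1$ (resp. $U_2\cup U_3\cup U_4$). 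Now comes the decisive point: because $E(\beta)$ is an eigenvalue, the proof of Proposition \ref{contfonctpropre} gives $\psi_{+,E(\beta)}=\alpha_\beta\psi_{-,E(\beta)}$, so $\psi_\beta$ is proportional to both subdominant solutions at once and therefore inherits \emph{both} zero-free regions. Their union is $U_0\cup U_1\cup U_2\cup U_3\cup U_4$, a full punctured neighbourhood of infinity, which in particular contains the closed lower half-plane $\{\Im x\le0\}$.

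Reading off the radius beyond which the Sibuya remainder stays below the leading term — uniformly for $(E,\sqrt\beta)$ in compact sets, hence, via the continuity of $E(\beta)$, uniformly for $\beta$ in compact subsets of $\Omega$ — then yields the desired $C(\beta)>0$ depending continuously on $\beta$, outside of which $\psi_\beta$ cannot vanish in $\{\Im x<0\}$.

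The main obstacle is the bookkeeping of the Stokes geometry together with the uniformity of the asymptotics. One has to verify that the two recessive sectors attached to $\R_\pm$ sit exactly so that their $3\pi/5$-neighbourhoods jointly cover $\{\Im x\le0\}$ for \emph{all} $\theta\in(-\pi,\pi)$; this is where the two-sided subdominance is indispensable, since a single subdominant solution, say $\psi_{+,E}$, does have zeros accumulating along the anti-Stokes ray separating $U_3$ and $U_4$, a ray lying in the lower half-plane. One must also check that the subleading part of $V_\beta$ and the energy $E(\beta)$ perturb only the bounded part of the plane and leave the sector structure at infinity unchanged, and that Sibuya's remainder estimates are uniform enough in $\beta$ to produce a continuous threshold; this last point also explains why $C(\beta)$ must be allowed to grow without bound as $\beta$ tends to the cut, where $\R_\pm$ reach the boundaries of their sectors and the argument degenerates.
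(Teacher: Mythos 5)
Your argument is essentially the paper's own proof: the eigenfunction, being proportional to both subdominant solutions $\psi_{\pm,E(\beta)}$, inherits the Sibuya asymptotics in the two enlarged sectors of half-opening $3\pi/5$ around the recessive directions (the paper's ${\mathcal S}^\pm$), whose union contains $\{\Im x<0\}$, and the nonvanishing leading term then excludes zeros there for $|x|$ large, with the continuity of $C(\beta)$ coming from the uniformity of the remainder estimates in $(\sqrt\beta,E(\beta))$. The one slip is your claim that the union of the two zero-free regions is a \emph{full} punctured neighbourhood of infinity: it omits their common boundary ray $e^{i(\frac{\pi}{2}+\frac{\arg\beta}{10})}\R_+$ in the upper half-plane (along which the zeros of $\psi_\beta$, an entire function of order $5/2$, do accumulate), but since this ray stays in the open upper half-plane for all $|\arg\beta|<\pi$, your conclusion for $\{\Im x\le 0\}$ is unaffected.
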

\begin{proof} This is just an immediate consequence of the Sibuya asymptotics at infinity of $\psi_\beta (x)$, in the two sectors,
\be
{\mathcal S}^\pm:= \{ x\in\C\, ;\, \left|\arg(ix) +\frac{\arg\beta}{10} \mp\frac{2\pi}5\right| < \frac{3\pi}{5}\}, \,\,
\ee
(see \cite{Si}, Chapter 2). Since ${\mathcal S}^{+}\cup {\mathcal S}^{-} = \C\backslash (e^{i(\frac\pi{2}+\frac\theta{10})}\R_+)\supset \{ \Im x < 0\}$, the result follows. 
\end{proof}
\vskip 0.3cm

We deduce from Propositions \ref{zero1} and \ref{zero2} that, for any continuous function $\tilde C(\beta)\geq C(\beta)$, the number of zeros of $\psi_\beta$ lying in the region,
$$
\tilde\Gamma_\beta:= \{ \Im x \leq \frac 2{3\sqrt{|\beta|}}\cos(\frac{{\rm arg}\beta}2)\, ,\, |x|\leq \tilde C(\beta)\}
$$
is constant. Indeed, the two previous propositions show that the boundary $\partial\tilde\Gamma_\beta$ of $\tilde\Gamma_\beta$ does not contain zeros, and thus the  number of zeros in $\tilde\Gamma_\beta$ (finite, because of the analyticity of $\psi_\beta$) is given by,
\be
\label{nbrezeros}
N_\beta = \frac1{2i\pi}\oint_{\partial\tilde\Gamma_\beta} \frac{\psi_\beta'(x)}{\psi_\beta(x)}dx=\frac1{2i\pi}\oint_{\partial\Gamma_\beta} \frac{\psi_\beta'(x)}{\psi_\beta(x)}dx <\infty.
\ee
In particular, this number is a continuous function of $\beta$, and therefore is constant for $\beta\in\Omega$. Moreover, still by Propositions \ref{zero1} and \ref{zero2}, these zeros are the only ones in $\{ \Im x \leq \frac 2{3\sqrt{|\beta|}}\cos(\frac{{\rm arg}\beta}2)\}$. At this point,  it will be useful to set,
\begin{definition}\sl
\label{defnode}
For any  eigenfunction $\psi$ of $H(\beta)$, we call {\bf node} of $\psi$ any zero of $\psi$ that lie in the lower half-plane $\C_-:=\{ \Im x<0\}$, and we denote by $N(\psi)$ their total number.
\end{definition}
The previous discussion shows that we have,
\begin{proposition}\sl
\label{nbrenode}
$N(\psi_\beta)$ is finite and does not depend on  $\beta\in\Omega$. Moreover, the nodes of $\psi_\beta$ coincide with its zeros  in the domain $\{ \Im x \leq \frac 2{3\sqrt{|\beta|}}\cos(\frac{{\rm arg}\beta}2)\}$, and their total number can be computed by using the formula  (\ref{nbrezeros}). 
\end{proposition}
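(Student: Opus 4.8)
The statement is a synthesis of Propositions \ref{zero1} and \ref{zero2} together with the continuity furnished by Proposition \ref{contfonctpropre}, so the plan is simply to assemble these ingredients. First I would localize the nodes and deduce finiteness. By definition a node lies in $\{\Im x<0\}$, while Proposition \ref{zero2} forbids zeros in $\{\Im x<0\}$ with $|x|\geq C(\beta)$; hence every node lies in the bounded set $\{\Im x<0\,,\,|x|< C(\beta)\}$. Since $\psi_\beta$ arises from a subdominant Sibuya solution and is therefore a non-identically-zero entire function of $x$, its zeros are isolated, so only finitely many can accumulate in this bounded region. This yields the finiteness of $N(\psi_\beta)$.

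Next I would identify the nodes with the zeros lying in the larger domain $\{\Im x\leq \frac{2}{3\sqrt{|\beta|}}\cos(\frac{\arg\beta}{2})\}$. This domain is the union of the lower half-plane $\{\Im x<0\}$ and the strip $S_\beta$ of Proposition \ref{zero1}. Since that proposition asserts that $\psi_\beta$ has no zero whatsoever in $S_\beta$ (in particular none on the real axis, which bounds the two regions), the zeros in the enlarged domain are exactly the zeros with $\Im x<0$, i.e.\ precisely the nodes.

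Finally I would establish the constancy in $\beta$ through the argument principle. Choosing any continuous $\tilde C(\beta)\geq C(\beta)$ and forming $\tilde\Gamma_\beta$ as in the discussion, Propositions \ref{zero1} and \ref{zero2} guarantee that the boundary $\partial\tilde\Gamma_\beta$ is free of zeros, so the contour integral in (\ref{nbrezeros}) computes exactly $N(\psi_\beta)$. The map $\beta\mapsto N(\psi_\beta)$ is then integer-valued, and it is continuous because $\psi_\beta$, and hence $\psi_\beta'/\psi_\beta$ along a contour that stays uniformly away from the zeros, depends continuously on $\beta$ by Proposition \ref{contfonctpropre}, while $\partial\tilde\Gamma_\beta$ can be deformed continuously with $\beta$. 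An integer-valued continuous function on the connected set $\Omega$ is constant, which gives the claim.

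The main obstacle I anticipate is the continuity of the counting integral: one must ensure that as $\beta$ varies no zero of $\psi_\beta$ crosses $\partial\tilde\Gamma_\beta$, which is exactly what the uniform zero-free regions of the two preceding propositions provide, and that the boundary can be parametrized so as to vary continuously with the parameter, so that the integrand and the contour jointly depend continuously on $\beta$. Once this is secured, the integer-valuedness forces local constancy and the conclusion follows immediately.
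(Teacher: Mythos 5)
Your proposal is correct and follows essentially the same route as the paper, whose ``proof'' is precisely the discussion preceding the proposition: Propositions \ref{zero1} and \ref{zero2} give a zero-free boundary $\partial\tilde\Gamma_\beta$, the argument-principle integral (\ref{nbrezeros}) then counts the zeros, and integer-valuedness plus continuity in $\beta$ forces constancy on $\Omega$, with the identification of nodes and zeros in the enlarged domain coming from the zero-free strip $S_\beta$. Your explicit attention to the continuity of the contour integral (no zeros crossing $\partial\tilde\Gamma_\beta$, continuous deformation of the boundary) is exactly the point the paper relies on implicitly.
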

\vskip 0.3cm
From now on, we denote by $N_{\psi}$ the total number of nodes  of $\psi_\beta$ for any  $\beta\in\Omega$.\\

In the following, we will also need a better control of the zeros in the region $\{ \Im x >0\}$. We have,\
\begin{proposition}\sl
\label{nozerosecteur}
The function $\psi_\beta$ does not admit any zero in the region,
$$
\{ 0< \arg x<\frac{\pi -\arg \beta}{10}\} \,\cup\, \{ \pi -\frac{\pi +\arg \beta}{10}< \arg x<\pi\}.
$$
\end{proposition}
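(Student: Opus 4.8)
The plan is to argue exactly as in the proof of Proposition~\ref{zero1}, but to integrate the identity~(\ref{psi'psi}) along \emph{horizontal half-lines} running off to $+\infty$ (for the first sector) and to $-\infty$ (for the second one). Write $\beta=be^{i\theta}$ with $b>0$, $|\theta|<\pi$, and let $A_1:=\{0<\arg x<\frac{\pi-\theta}{10}\}$. Fix $x_0=s+it_0\in A_1$. Since $A_1\subset{\mathcal S}^{+}$ (the sector of Proposition~\ref{zero2}), $\psi_\beta$ is subdominant there and decays as $r\to+\infty$ along the ray $\{r+it_0\,;\,r\geq s\}$, which remains in $A_1$; as in Proposition~\ref{zero1} the boundary terms at both horizontal ends vanish, so~(\ref{psi'psi}) yields
\be
\Im\bigl(\psi_\beta'(x_0)\overline{\psi_\beta(x_0)}\bigr)=-\int_s^{+\infty}\Im\bigl(V_\beta(r+it_0)-E(\beta)\bigr)\,|\psi_\beta(r+it_0)|^2\,dr.
\ee
If $\psi_\beta(x_0)=0$ the left-hand side vanishes, so it is enough to show that this integral cannot vanish.

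The engine is a monotonicity of $\Im V_\beta$ along such rays. Writing $x=\rho e^{i\phi}$ and using holomorphy of $V_\beta$, one computes
\be
\frac{\pa}{\pa r}\Im V_\beta(x)=\Im V_\beta'(x)=2\rho\sin\phi+3\sqrt b\,\rho^2\cos\Bigl(2\phi+\frac\theta2\Bigr).
\ee
For $0<\phi<\frac{\pi-\theta}{10}$ one has $2\phi+\frac\theta2\in\bigl(\frac\theta2,\frac{2\pi+3\theta}{10}\bigr)\subset(-\frac\pi2,\frac\pi2)$, where both inclusions use $|\theta|<\pi$; hence $\cos(2\phi+\frac\theta2)>0$, and since $\sin\phi>0$ we get $\pa_r\Im V_\beta>0$ throughout $A_1$. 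Thus $r\mapsto\Im(V_\beta(r+it_0)-E(\beta))$ is strictly increasing to $+\infty$ and changes sign at most once on $[s,+\infty)$. Whenever $\Im(V_\beta(x_0)-E(\beta))\geq0$ the integrand is then nonnegative and positive for large $r$, the integral is strictly positive, and $\psi_\beta(x_0)\neq0$.

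The remaining, and most delicate, case is that of $x_0$ in the subregion $B:=\{x\in A_1\,;\,\Im(V_\beta(x)-E(\beta))<0\}$, where the integral splits into a negative and a positive part whose sum is not manifestly signed. I expect this to be the main obstacle. To close it I would combine two facts: since $\Im V_\beta>0$ on $A_1$ but $\Im V_\beta\to0$ at the origin, the lower part of $B$ sits inside the strip of Proposition~\ref{zero1}, where $\psi_\beta$ has no zeros; and the sign of the boundary current $\Im(\psi_\beta'\overline{\psi_\beta})$ on the bounding anti-Stokes ray $\{\arg x=\frac{\pi-\theta}{10}\}$ can be controlled, because there the cubic contribution to $\Im(e^{2i\phi}V_\beta)$ vanishes (its coefficient is $\sqrt b\sin(\frac\theta2+\frac\pi2+5\phi)=\sqrt b\sin\pi=0$) and $\psi_\beta$ is purely oscillatory. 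This fixed sign on the boundary, together with the strict monotonicity above, should let the current argument be propagated across $B$ and exclude zeros there as well.

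The second sector $A_2:=\{\pi-\frac{\pi+\theta}{10}<\arg x<\pi\}$ is handled symmetrically: it lies in ${\mathcal S}^{-}$, so one uses the form $\Im(\psi_\beta'\overline{\psi_\beta})=\int_{-\infty}^{s}(\cdots)$ with $\psi_\beta$ subdominant as $r\to-\infty$. The same computation gives $\pa_r\Im V_\beta>0$ on $A_2$ (now $2\phi+\frac\theta2$ lies in $(-\frac\pi2,\frac\pi2)$ modulo $2\pi$), so $\Im(V_\beta-E(\beta))$ decreases to $-\infty$ as $r\to-\infty$ and again changes sign at most once; the sign discussion is then identical. Alternatively, for real $\beta$ the reflection $x\mapsto-\overline x$ carries $A_2$ onto $A_1$ and $V_\beta$ onto $\overline{V_\beta}$, reducing the second case to the first.
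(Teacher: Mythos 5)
Your setup is sound exactly up to the point you flag yourself: identity (\ref{psi'psi}) is indeed valid along every horizontal line (the directions $\arg x=0$ and $\arg x=\pi$ lie inside the subdominance sectors $S_{\pm 1}(\beta)$ of (\ref{secteurs}), so both boundary terms vanish), and your computation $\partial_r\Im V_\beta=2\rho\sin\phi+3\sqrt{b}\,\rho^2\cos(2\phi+\frac{\theta}{2})>0$ on $A_1$ is correct. But the case $\Im(V_\beta(x_0)-E(\beta))<0$ is not a technical leftover; it is the whole difficulty, and your sketch does not close it. A horizontal half-line carries no anchor point at which the sign of the current $\Im\bigl(\psi_\beta'\overline{\psi_\beta}\bigr)$ is known, and extending to the full line does not produce one: for $\Im x>\frac{2}{3\sqrt{b}}\cos\frac{\theta}{2}$ (and $A_1$ contains points with arbitrarily large $\Im x$) the discriminant computation in the proof of Proposition \ref{zero1} shows that $\partial_r\Im V_\beta(r+it)$ vanishes twice, so $r\mapsto\Im(V_\beta(r+it)-E(\beta))$ can change sign up to three times on $\R$ and neither representation in (\ref{psi'psi}) is signed. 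As for the boundary ray $\arg x=\frac{\pi-\theta}{10}$: the vanishing of the cubic term in $\Im(e^{2i\phi}V_\beta)$ there (the anti-Stokes property) gives no information on the sign of the current --- an oscillatory solution carries a current of either sign --- and you provide no mechanism for ``propagating'' a boundary sign across $B$ transversally to your integration lines. So zeros in $B$ are not excluded by your argument.

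The paper resolves precisely this case by a different choice of paths: rays through the origin, $x=se^{i\alpha}$ with $0<\alpha<\frac{\pi-\theta}{10}$, along which the derivative of the (rotated) current is $p_\alpha(r)|\psi_\beta(re^{i\alpha})|^2$ with $p_\alpha(r)=r^2\bigl[\sin 4\alpha+r\cos(5\alpha+\frac{\theta}{2})\bigr]-\Im(E(\beta)e^{2i\alpha})$. Two features make this work where your lines fail. First, since $\sin 4\alpha>0$ and $\cos(5\alpha+\frac{\theta}{2})>0$, the function $p_\alpha$ is increasing on $\R_+$ and hence changes sign at most once on the \emph{whole} half-line, wherever $E(\beta)$ sits. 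Second --- and this is the missing idea --- every such ray is anchored at $x=0$, where (\ref{psi'psi}) with $s=t=0$ yields the strict inequality $\Im\psi_\beta'(0)\overline{\psi_\beta(0)}<0$ (on $\R$ one has $\Im V_\beta(r)=\sqrt{b}\cos(\frac{\theta}{2})r^3$, increasing, so one of the two integrals there is strictly signed). With this anchor both of your cases close at once: if $p_\alpha\geq 0$ on $[s,+\infty)$, the tail representation $-\int_s^{+\infty}p_\alpha|\psi_\beta|^2\,dr$ is strictly negative; otherwise $p_\alpha<0$ on $[0,s]$ and the forward representation is bounded above by the strictly negative value at the origin. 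Hence the current never vanishes on the ray, and $\psi_\beta$ has no zero in $A_1$; the sector $A_2$ is handled the same way for $s<0$. If you replace your horizontal half-lines by these rays, the rest of your sign computations goes through essentially unchanged.
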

\begin{proof} By (\ref{psi'psi}) for $s=t=0$, we see that $\Im \psi'_\beta (0)\overline{\psi_\beta (0)}<0$. Moreover, since $\psi_\beta$ is subdominant in both sectors $S_{\pm 1}(\beta)$ defined by,
\be
\label{secteurs}
S_j(\beta):= \{ x\in\C\, ;\, \left|\arg(ix) +\frac{\arg\beta}{10} -\frac{2j\pi}5\right| < \frac{\pi}{5}\}, \,\, j= \pm 1
\ee
we see that, for $\alpha\in (0, (\pi -\arg \beta)/10)$, the function $s\mapsto \psi_\beta (se^{i\alpha})$ is in $L^2(\R_+)$. Therefore, for $s>0$, using the equation  $H(\beta)\psi_\beta = E(\beta)\psi_\beta$, we obtain,
\begin{align*}
\Im \psi'_\beta (se^{i\alpha})\overline{\psi_\beta (se^{i\alpha})}&=\Im \psi'_\beta (0)\overline{\psi_\beta (0)}+\int_0^s p_\alpha (r)|\psi_\beta (re^{i\alpha})|^2 dr \\
&=-\int_s^{+\infty}p_\alpha (r)|\psi_\beta (re^{i\alpha})|^2 dr,
\end{align*}
with, 
\begin{align*}
p_\alpha (r):&= \Im (r^2e^{4i\alpha} +i\sqrt{\beta} r^3e^{5i\alpha}-E(\beta)e^{2i\alpha})\\
&=r^2\left[\sin 4\alpha + r\cos(5\alpha +\frac{\arg\beta}2)\right]-\Im (E(\beta)e^{2i\alpha}).
\end{align*}
In particular, since both $\sin 4\alpha$ and $\cos(5\alpha +\frac{\arg\beta}2)$ are positive, $p_\alpha(r)$ changes sign at most once on $\R_+$, and is positive for large $r$. We deduce that $\Im \psi'_\beta (se^{i\alpha})\overline{\psi_\beta (se^{i\alpha})}<0$ for all $s>0$. \\
In the same way, one finds $\Im \psi'_\beta (se^{i\alpha})\overline{\psi_\beta (se^{i\alpha})}<0$ when $s<0$ and $\pi-\alpha \in (0, (\pi +\arg\beta )/10)$, and the result follows.
\end{proof}

Now, we plan to extend continuously the function $E(\beta)$ as much as possible, as an eigenvalue of $H(\beta)$. We first show,
\begin{proposition}\sl
\label{bounded}
The function $\Omega\ni \beta \mapsto E(\beta)$ is uniformly bounded.
\end{proposition}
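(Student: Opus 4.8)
The plan is to argue by contradiction, exploiting the constancy of the node number established in Proposition~\ref{nbrenode}. If $E$ were not uniformly bounded on $\Omega$, there would exist a sequence $(\beta_j)\subset\Omega$ with $|E(\beta_j)|\to+\infty$; after extraction I may assume in addition that $\arg E(\beta_j)$ converges and that $\beta_j$ converges in the Riemann sphere. The objective is then to show that $|E(\beta_j)|\to+\infty$ forces the number of nodes $N(\psi_{\beta_j})$ to tend to $+\infty$, which contradicts the fact that $N(\psi_\beta)\equiv N_\psi$ is a finite constant on $\Omega$. It is worth noting that a soft argument based only on the compactness of the resolvent does not suffice: for a non-selfadjoint family such as $H(\beta)$, an eigenvalue can a priori escape to infinity while $\beta$ stays in a fixed compact subset of $\C_c$, so an actual count of the nodes is unavoidable.

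To count the nodes I would carry out a complex WKB analysis of the eigenvalue equation $\psi_\beta''=(V_\beta-E)\psi_\beta$, $V_\beta(x)=x^2+i\sqrt{\beta}x^3$, in the regime $|E|\to+\infty$. Rescaling $x=|E|^{1/3}\xi$ brings the equation to the semiclassical form
\[
h^2\frac{d^2\psi}{d\xi^2}=\bigl(W_\beta(\xi)+h^{2/5}\xi^2\bigr)\psi,\qquad h:=|E|^{-5/6},\quad W_\beta(\xi):=i\sqrt{\beta}\,\xi^3-e^{i\arg E},
\]
so that $h$ is a genuine small parameter and the leading symbol $W_\beta$ is dominated by its cubic part. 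The number of zeros enclosed by the contour of Proposition~\ref{nbrenode} is given exactly by the argument-principle integral~(\ref{nbrezeros}), which I would evaluate by inserting the Sibuya/WKB asymptotics of $\psi_\beta$ along that contour. Away from the turning points (the roots of $W_\beta$) one has $\psi_\beta'(x)/\psi_\beta(x)\approx\pm\sqrt{V_\beta(x)-E}$, which is non-oscillatory and contributes no net winding; the whole winding is therefore accumulated between the turning points and is governed by a Bohr--Sommerfeld cycle integral equal, after rescaling, to $|E|^{5/6}\oint\sqrt{-W_\beta}\,d\xi$. Since the period $\oint\sqrt{-W_\beta}\,d\xi$ is a nonzero quantity of order $O(1)$ in $|E|$, the count grows like a positive multiple of $|E(\beta_j)|^{5/6}$, and hence $N(\psi_{\beta_j})\to+\infty$.

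Before concluding I must make sure that the zeros captured by~(\ref{nbrezeros}) are genuinely nodes in the sense of Definition~\ref{defnode}, i.e.\ that the oscillatory, node-bearing region sits in the lower half-plane $\C_-$ and that no zero drifts out of it: here Propositions~\ref{zero1} and~\ref{zero2} already confine the zeros to a bounded region of $\C_-$, and Proposition~\ref{nozerosecteur}, together with the subdominance of $\psi_\beta$ in the Sibuya sectors, excludes zeros along the relevant upper directions. Granting this, $N(\psi_{\beta_j})\to+\infty$ contradicts the constant finite value $N_\psi$, and the proposition follows. I expect the rigorous semiclassical count to be the main obstacle: one has to justify the complex WKB approximation uniformly as $\arg E$ varies and as $\beta_j$ ranges over the relevant parameters, to control the Stokes geometry and the location of the turning points of $W_\beta$, and to treat separately the degenerate limits $\beta_j\to0$, $\beta_j\to\infty$ and $\beta_j\to\R_-$, where $\sqrt{|\beta|}$ or $\cos(\tfrac{\arg\beta}{2})$ collapses and with it the strip supplied by Proposition~\ref{zero1}; this last point is presumably handled by restricting to compact subsets of $\C_c$ and patching, or by a dedicated analysis of each boundary regime.
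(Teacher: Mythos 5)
Your overall strategy is the same as the paper's: argue by contradiction, rescale $x=|E|^{1/3}\xi$ so that $h:=|E|^{-5/6}$ becomes a semiclassical parameter, and show by complex WKB that $|E(\beta_j)|\to\infty$ would force the node count to grow like $h^{-1}\sim|E|^{5/6}$, contradicting the finiteness and constancy of $N_\psi$ from Proposition \ref{nbrenode}. The paper performs exactly your scaling, up to an extra rotation $x=|E(\beta_k)|^{1/3}e^{-i\theta_k/10}y$ whose purpose is to keep the eigenfunction exponentially small on the real axis uniformly as $\arg\beta_k\to\pm\pi$ (this makes the proof reusable on the cut in Section 6). Two of the degenerate regimes you worry about at the end do not actually arise here: $\Omega$ is a bounded neighborhood of $\beta_0$ in $\C_c$, so $\beta_j\to 0$ and $\beta_j\to\infty$ are excluded; those regimes are treated separately, with different scalings, in the proof of Proposition \ref{betapetit}.

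Where you diverge -- and where your sketch has a genuine soft spot -- is the execution of the count. You propose to insert WKB asymptotics into the global argument-principle integral (\ref{nbrezeros}) and assert that away from the turning points the logarithmic derivative ``contributes no net winding'', so that the whole winding reduces to a Bohr--Sommerfeld cycle integral. As stated this is incorrect: each arc of the contour on which $\varphi'/\varphi\approx -h^{-1}\sqrt{W_\beta}$ contributes a term of size $O(1/h)$ to the winding, and the branch of the square root followed by the true eigenfunction switches as the contour crosses anti-Stokes lines; recovering the action integral therefore requires solving the connection problem globally along $\partial\tilde\Gamma_\beta$ and controlling the segments where the contour approaches the zero-bearing lines -- precisely the difficulties you defer without resolving. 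The paper sidesteps all of this by a purely \emph{local} count: topological properties of the anti-Stokes lines guarantee a sector attached to one turning point $y_+(\lambda)$ containing a real half-axis at infinity and no other turning point, so the recessiveness of $\varphi$ there validates (\ref{BKW}) in that sector and its two neighbors; a small disk $D$ around $y_+(\lambda)$ is then chosen, a point $y_h$ on the zero line at distance $\gtrsim\delta$ from $\varphi^{-1}(0)$ is produced by pigeonhole -- using the very bound $N\leq N_\psi$ one is about to contradict -- and the short arc near $y_h$ is handled by the crude estimate $\varphi'/\varphi=O(1/h)$ of Lemma \ref{estphi'surphi}. This yields only a lower bound $N_D= N_0/h+o_\varepsilon(h^{-1})+O_\varepsilon(1)+O(\varepsilon/h)$ for the zeros in $D$, but a lower bound is all the contradiction needs; no quantization condition and no global Stokes geometry enter. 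To make your route rigorous you would either have to establish the connection formulas along the entire contour, or adopt the paper's localization; the pigeonhole choice of $y_h$ together with Lemma \ref{estphi'surphi} is the missing idea in your proposal.
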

\begin{proof} By absurd, assume there exists a sequence $(\beta_k)_{k\geq 0}$ in $\Omega$ such that $|E(\beta_k)|\to\infty$ as $k\to\infty$. By extracting a subsequence, we can also assume that $(\beta_k)_{k\geq 0}$ admits a limit $\beta_\infty=b_\infty e^{i\theta_\infty}\in\partial\Omega$, with $b_\infty >0$ and $\theta_\infty \in [-\pi, \pi]$.

In the equation $H(\beta_k)\psi_{\beta_k}(x) = E(\beta_k)\psi_{\beta_k}(x)$, we make the (complex) change of variable,
$$
x= |E(\beta_k)|^{1/3}e^{-i\theta_k/10}y,
$$
where we have used the notation $\theta_k:=\arg\beta_k \in (-\pi, \pi)$.
Then, setting,
\be
\label{hlambda}
h:=  |E(\beta_k)|^{-5/6}\quad ; \quad \lambda:= \frac{E(\beta_k)e^{-i\theta_k/5}}{|E(\beta_k)|}\, ;
\ee
$$
P:=-h^2\frac{d^2}{dy^2}+ i\sqrt{|\beta_k|}y^3+h^{2/5}e^{-i\theta_k/5}y^2,
$$
we see that the new function $\varphi (y):= \psi_{\beta_k}(|E(\beta_k)|^{1/3}e^{-i\theta_k/10}y)$ is solution of,\\
\be
P\varphi (y) = \lambda \varphi (y).
\ee
{}\\
Moreover, since $\psi_\beta$ is subdominant in both sectors $S_{\pm 1}(\beta)$ defined in (\ref{secteurs}),
 we observe that the function $\varphi$ is automatically  exponentially small at infinity on $\R$ (and, actually, even in the limit $|\arg\theta_k |\to \pi$). In particular, it is an eigenfunction of $P$ in the usual sense, and, as $k\to\infty$, we have $h\to 0_+$, $|\beta_k|\to b_\infty >0$, $\theta_k\to\theta_\infty$, and $\lambda$ stays on the circle $\{|\lambda | =1\}$, with $\Re\lambda >0$ (since $\Re P= -h^2\frac{d^2}{dy^2} + h^{2/5}y^2\cos (\theta_k/5)\geq h^{6/5}\sqrt{\cos(\theta_k/5)} >0$). By extracting a subsequence, we can also assume that $\lambda =\lambda (h)$ admits a limit $\lambda_0$.\\
 
 Then, we can try to use the standard  semiclassical complex WKB method, in order to obtain another estimate on the number of nodes of  $\psi_{\beta_k}$. \\

 Since $\Re\lambda >0$, we see that, among the three turning points (that is, the complex solutions of $i\sqrt{|\beta_k|}y^3+h^{2/5}e^{-i\theta_k/5}y^2=\lambda$), exactly two of them (say, $y_\pm(\lambda)$) are in the half-plane $\{ \Im y < Ch^{2/5}\}$ ($C>0$ constant) and are respectively close to $b_\infty^{-1/6}e^{i(2\arg\lambda -\pi)/6}$ and $b_\infty^{-1/6}e^{i(2\arg\lambda -5\pi)/6}$, while the third one $\tilde y(\lambda)$ is close to $b_\infty^{-1/6}e^{i(2\arg\lambda +3\pi)/6}$.\\
 
  Moreover, each of this points gives rise to three anti-Stokes lines, delimiting three complex open sectors. In particular, the sectors attached either to $y_+(\lambda)$ or to $y_-(\lambda)$ are six. In addition, any anti-Stokes line is either bounded (in which case it relates two different turning points) or asymptotic at infinity to one of the five Sibuya directions, given by $\{ \arg y = (4j-3)\pi/10\}$, $j=1,\dots,5$.
  
 Then, for topological reasons related to the properties of the anti-Stokes lines (in particular, the fact that they cannot cross each other away from the turning points), we see that at least one of these six sectors necessarily contains either $\{ y\in \R_{+}\, ;\, y >>1\}$ or $\{ y\in \R_{-}\, ;\, y<< -1\}$, without containing any other turning point.\\
  
  Let us assume, for instance, that this open sector (that we denote by $\Sigma_+$) is attached to $y_+(\lambda)$ and contains $\{ y\in \R_{+}\, ;\, y >>1\}$ (the other cases are similar). Then, the complex WKB method tells us that $\varphi (y)\not =0$ on $\Sigma_{+}$, and that one has the semiclassical asymptotics,\\
\be
\label{BKW}
\frac{\varphi'(y)}{\varphi (y)} = \frac{-1}{h}\sqrt{i\sqrt{|\beta_k|}y^3+h^{2/5}e^{-i\theta_k/5}y^2-\lambda} +{\mathcal O}(1) \quad \mbox{as}\,\, h\to 0_+,
\ee
{}\\
 locally uniformly with respect to $y \in \Sigma_+$. Moreover, this asymptotics remains valid in the two  sectors adjacent to $\Sigma_+$, as long as one do not cross a third anti-Stokes line containing a turning point. In particular, it ceases to be valid at the boundary, say $L_{+}$, between these two adjacent sectors (near which the zeros of $\varphi$ lie). However, there exists a general estimate on $\varphi' /\varphi$, valid near $L_{+}$, too.  We describe it in the next lemma.\\
 
For any $\delta >0$ fixed,  set,
 $$
 {\mathcal E}_\delta := \{ y\in\C\, ; \, \dist (y, \varphi^{-1 }(0)) \geq \delta h\}.
 $$
 Then, we have,\\
 \begin{lemma}\sl
 \label{estphi'surphi}
 For any $\delta >0$, one has,
  $$
 \frac{\varphi'(y)}{\varphi (y)} = {\mathcal O}\left(\frac1{h}\right),
 $$
 locally uniformly for $ y\in {\mathcal E}_\delta$, and uniformly for $h>0$ small enough.
 \end{lemma}
This result is probably well-known to semiclassical experts, but for the sake of completeness we give a proof in the Appendix. 
 \vskip 0.3cm
 Now, we  estimate the number of nodes of $\varphi$ lying near $y_+(\lambda)$. By Propositions \ref{nbrenode} and \ref{nozerosecteur}, we already know that this number is less than $N_\psi$ and is bounded uniformly with respect to $h$. As a consequence, denoting by $W_+$  a sufficiently small neighborhood of $y_+(\lambda)$, for any constant $\delta >0$ arbitrarily small, and for all $h$ small enough, there exists a point $y_h\in L_+\cap {\mathcal W}_+$ with $(1+N_\psi )\delta\leq |y_h -y_+(\lambda)|\leq (1+2N_\psi )\delta$ and $ \dist (y, \varphi^{-1 }(0)) \geq \delta$. We denote by $D$ the disk centered at $y_+(\lambda)$ with radius $|y_h -y_+(\lambda)|$ (in particular, if $W_+$ has been chosen small enough, the part of $L_+$ between $y_+(\lambda)$ and $y_h$ is included in $D$, and the boundary  $\partial D$  of $D$ is transverse to $L_+$ at $y_h$). \\
 
Then, by construction, we have $\varphi (y) \not= 0$ when $y\in\partial D$, and  the number of nodes of $\varphi$ lying in $D$ can be written as,
$$
N_D = \frac1{2i\pi}\oint_{\partial D}\frac{\varphi'(y)}{\varphi (y)} \, dy.
$$
 For any $\varepsilon >0$ arbitrarily small, we set,
 $$
 \gamma (\varepsilon ):= \partial D \cap \{ |y-y_h| \leq \varepsilon\}\, ;\,  \Gamma (\varepsilon )=\partial D \backslash  \gamma (\varepsilon ).
 $$
 Since the length of $\gamma (\varepsilon)$ is ${\mathcal O}(\varepsilon)$, by using Lemma \ref{estphi'surphi}, we obtain,
\be
\label{ND1}
 N_D = \frac1{2i\pi}\oint_{\Gamma (\varepsilon )}\frac{\varphi'(y)}{\varphi (y)} \, dy +{\mathcal O}(\frac{\varepsilon}{h}),
\ee
 uniformly with respect to $\varepsilon$ and $h$. On the other hand, since $\dist (\Gamma (\varepsilon ), L_+)\geq \varepsilon$, on this set we can use the asymptotics (\ref{BKW}), leading to,
\be
\label{ND2}
 \oint_{\Gamma (\varepsilon )}\frac{\varphi'(y)}{\varphi (y)} \, dy = \oint_{\Gamma (\varepsilon )}\frac{-1}{h}\sqrt{i\sqrt{|\beta_k|}y^3+h^{2/5}e^{-i\theta_k/5}y^2-\lambda}\, dy +{\mathcal O}_\varepsilon(1),
\ee
where the notation ${\mathcal O}_\varepsilon(1)$ means that this quantity depends on $\varepsilon$, and is uniformly bounded as $h\to 0_+$.\\

Now, by extracting a subsequence, we can also assume that  $y_h$ admits a limit $y_0$, as $h\to 0_+$. Hence, gathering this information with (\ref{ND1})-(\ref{ND2}), we obtain,
$$
N_D = \frac1{2i\pi}\oint_{\Gamma_0}\frac{-1}{h}\sqrt{i\sqrt{b_\infty}y^3-\lambda_0}\, dy+o_\varepsilon (h^{-1}) +{\mathcal O}_\varepsilon(1)+{\mathcal O}(\frac{\varepsilon}{h}),
$$
where $\Gamma_0$ is a closed loop starting from $y_0$, and then returning to $y_0$ after having surrounded the anti-Stokes line $L_0^+$ between $y_+:=b_\infty^{-1/6}e^{i(2\arg\lambda -\pi)/6}$  and $y_0$ (this time, the anti-Stokes lines are those of  $i\sqrt{b_\infty}y^3 - \lambda_0$): see Figure 1. 

\begin{figure}[h]
\label{fig1}
\vspace*{0ex}
\centering
\includegraphics[width=0.6\textwidth,angle=0]{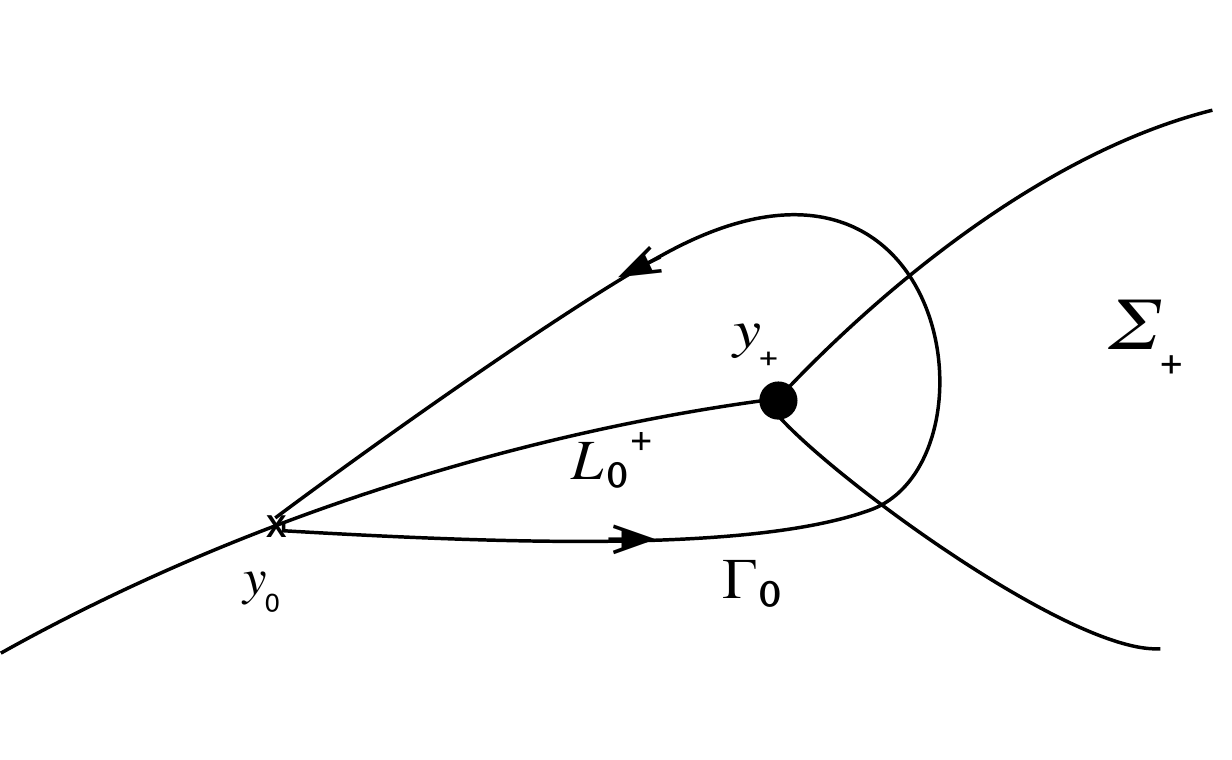}

\vspace*{0ex}
\caption{}
\end{figure}

Then, by deforming this contour up to $L_0^+$ (and because of the jump of the square root on it), a standard computation gives,
$$
\left| \frac1{2i\pi}\oint_{\Gamma_0}\sqrt{i\sqrt{b_\infty}y^3-\lambda_0}\, dy\right| =  \frac{2b_\infty^{\frac1{12}}}{\pi \sqrt 3}|y_+- y_0|^{3/2}+{\mathcal O}(|y_+- y_0|^{5/2})=: N_0.
$$
In particular, since $|y_+- y_0|\sim \delta$, we have  $N_0>0$  for  $\delta>0$ fixed sufficiently small. In this case,  we obtain,
\be
\label{asymptzero}
 N_D =\frac{N_0}{h} +o_\varepsilon (h^{-1}) +{\mathcal O}_\varepsilon(1)+{\mathcal O}(\frac{\varepsilon}{h}).
\ee
and thus, for fixed $\varepsilon>0$ small enough,  we see that $N_D$ tends to infinity as $h\to 0$ (that is, as $k\to \infty$). But this is in contradiction with  the fact that $N_D\leq N_\psi$  with $N_\psi$ finite and independent of $k$.
\end{proof}
\vskip 0.3cm
Since $H(\beta)$ forms an analytic family of type A with only finite multiplicity eigenvalues, and $E(\beta)$ remains bounded on $\Omega$, we have that, for any point $\beta_1$ of the boundary $\partial\Omega$, $E(\beta)$ can be extended (as an eigenvalue of $H(\beta)$) into a multiple-valued analytic  function on a neighborhood of $\beta_1$, with another possible branch point at $\beta_1$ only. \\

Now, if $\gamma\, :\, [0,1]\to \C$ is a simple continuous path, at a first stage we must consider the possibility of the existence of branch points (even if, afterwards, we will see that they do not exist). For that purpose, we define,

\begin{definition}\sl
For any $t_0\in (0,1)$, we call $\gamma${\bf -semi-neighborhood of} $\gamma(t_0)$ any open set $\omega\subset \C$ whose boundary $\partial\omega$ contains $\gamma(t_0)$ and coincides, near $\gamma(t_0)$, with  $\{\gamma(t); |t-t_0|<\varepsilon\} $ for some $\varepsilon >0$ (see Figure 2). 
\end{definition}

\begin{figure}[h]
\label{fig3}
\vspace*{0ex}
\centering
\includegraphics[width=0.6\textwidth,angle=0]{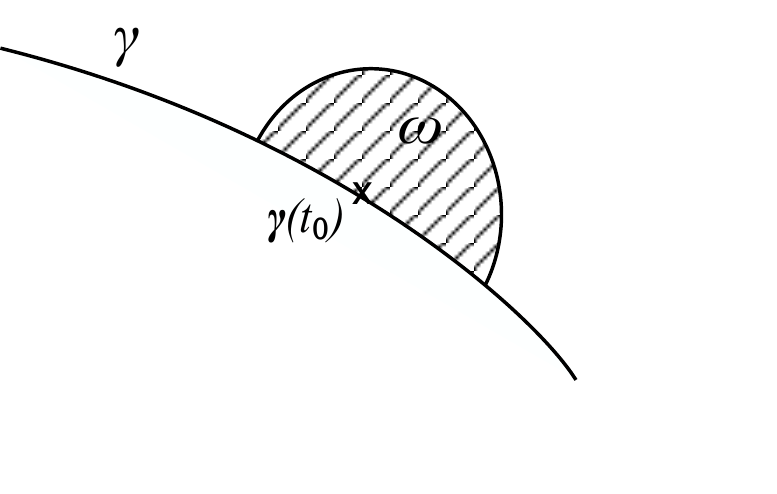}

\vspace*{0ex}
\caption{}
\end{figure}

Then, the previous discussion and a standard argument of connectedness show that we have,
\begin{proposition}\sl
\label{seminei}
Let $\gamma\, :\, [0,1]\to \C_c\backslash\{\beta_0\}$ be an arbitrary simple continuous path such that $\gamma(0)\in\Omega$. Then, 
for any determination $\tilde E(\beta)$ of $E(\beta)$, there exists a finite set $\{t_1,\dots,t_k\}\subset (0,1)$, such that $\tilde E(\beta)$ can be extended (as an eigenfunction of $H(\beta)$) into a holomorphic function on ${\mathcal W}:={\mathcal U}\cup \omega_1\cup\dots\cup\omega_k$, where ${\mathcal U}$ is a complex  neighborhood of $\gamma\backslash\{ \gamma(t_1),\dots,\gamma(t_k),\gamma(1)\}$, and $\omega_j$ ($j=1,\dots,k$) is a $\gamma$-semi-neighborhood of $\gamma(t_j)$. Moreover, the extension is continuous up to the boundary $\partial{\mathcal W}$ of ${\mathcal W}$.
\end{proposition}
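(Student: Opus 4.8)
The plan is to continue $\tilde E$ along $\gamma$ by a connectedness argument, invoking repeatedly the local extension mechanism established just above the Definition: at any point where the eigenvalue stays bounded, $E$ extends into a multiple-valued analytic function with at most one branch point there. Concretely, I would let $T$ denote the set of $s\in[0,1]$ such that $\tilde E$ admits a holomorphic extension of the announced type over $\gamma([0,s])$, that is, a genuine complex neighborhood $\mathcal U_s$ of $\gamma([0,s])$ deprived of finitely many points $\gamma(t_j)$, together with $\gamma$-semi-neighborhoods $\omega_j$ at those points, the extension being continuous up to the resulting boundary. Since $\gamma(0)\in\Omega$ and $\Omega$ is open, $\tilde E$ is already holomorphic near $\gamma(0)$, so an initial segment belongs to $T$; the goal is to show $T=[0,1]$.

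For the local step at a parameter $t_0$ reached by the continuation, I would distinguish two cases according to the extension mechanism applied at $\gamma(t_0)$. If $\gamma(t_0)$ is not a branch point, then $\tilde E$ extends to a full neighborhood of $\gamma(t_0)$ and the continuation proceeds past $t_0$ with a genuine neighborhood, absorbing $\gamma(t_0)$ into $\mathcal U$. If $\gamma(t_0)$ is a branch point, then on a small disk $D$ about $\gamma(t_0)$ the function is genuinely multivalued; since $\gamma$ is simple, $\gamma\cap D$ is an arc through $\gamma(t_0)$ splitting $D$ into two simply connected pieces, and I would take $\omega_j$ to be one of them. On $\omega_j$, which is simply connected and avoids the branch point, the branch continuing $\tilde E$ from the $\gamma([0,t_0))$ side is single-valued and holomorphic, and because boundedness of $E$ forces the singularity at $\gamma(t_0)$ to be at worst an algebraic branch point, this branch extends continuously up to $\partial\omega_j$, in particular up to both sides of the arc $\gamma\cap D$. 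This realizes the continuation through $\gamma(t_0)$; on the far side $\gamma((t_0,t_0+\delta])$ one again recovers full neighborhoods. Hence $T$ is open to the right, while a matching closedness argument, taking the limit of the extensions (which exists by finite multiplicity together with boundedness), shows $\sup T\in T$. Connectedness of $[0,1]$ then yields $T=[0,1]$.

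The point requiring real care, and the main obstacle, is to guarantee that the extension mechanism can actually be invoked at \emph{every} point reached along $\gamma$, i.e. that $E$ stays bounded throughout the continuation and not merely on $\Omega$. I would handle this by observing that Proposition \ref{bounded} is genuinely local: its proof uses only that the continued eigenfunction has a finite number of nodes equal to $N_\psi$, together with the semiclassical lower bound showing $N_D$ to be of order $1/h$, hence unbounded, whenever the energy explodes. As $\tilde E$ is continued, the associated normalized eigenfunction is continued as well (as in Proposition \ref{contfonctpropre}), and by Propositions \ref{zero1}, \ref{zero2} and \ref{nbrenode} its number of nodes stays locally constant, hence constant and equal to $N_\psi$ along the connected continuation. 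Consequently, the blow-up argument of Proposition \ref{bounded} applies verbatim near any boundary point of the current domain, so $E$ cannot explode and the extension mechanism remains available at each stage.

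Finally, the finiteness of the branch set $\{t_1,\dots,t_k\}$ follows from compactness of $[0,1]$ together with the fact that, immediately past each branch point, $\tilde E$ is holomorphic on a full one-sided neighborhood: the branch points are therefore isolated along $\gamma$ and cannot accumulate. The continuity of the extension up to $\partial\mathcal W$ is automatic on $\mathcal U$ by holomorphy, and has just been arranged on each $\omega_j$; at the excluded endpoint $\gamma(1)$ it holds because, by boundedness and finite multiplicity, $E(\beta)$ admits a limit as $\beta\to\gamma(1)$ along $\gamma$.
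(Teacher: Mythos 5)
Your proof is correct and follows essentially the same route as the paper, which dispatches this proposition in a single line (``the previous discussion and a standard argument of connectedness''): the same local extension mechanism (analyticity of type A and finite multiplicity \`a la Kato--Sibuya, combined with the boundedness of Proposition \ref{bounded}) iterated along $\gamma$ by an open-closed argument, with semi-neighborhoods selecting single-valued branches at the branch points, finitely many by compactness of $[0,1]$. Your main elaboration --- re-applying Proposition \ref{bounded} at each stage of the continuation via the constancy of the node count furnished by Propositions \ref{zero1}, \ref{zero2} and \ref{nbrenode} --- is precisely the point the paper leaves implicit, and you resolve it exactly as the paper intends.
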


In the next section, we relate such an extension with the $E_n(\beta)$'s of (\ref{calic}).\\

\begin{remark}\sl
Actually, the extension can be made in the larger open set obtained by replacing the  $\gamma$-semi-neighborhoods by ``cut-neighborhoods'' of the form $\tilde\omega_j\backslash L_j^+$, where $\tilde\omega_j$ is a neighborhood of $\gamma(t_j)$, and $L_j^+=\gamma(t_j) + z_j\R_+$ with $z_j\in \C\backslash\{0\}$ transverse to $\gamma$ at $\gamma (t_j)$ (in the case of a $C^1$-path). But the previous notion of semi-neighborhoods is enough for our purposes.
\end{remark}

\section{The spectrum for small parameter}
\label{secbetapeti}

Let $\gamma\, :\, [0,1)\to \C_c\backslash\{\beta_0\}$ be a simple continuous path, such that $\gamma(0)\in\Omega$, and $\{\gamma (t)\, ;\, 1-\mu \leq t <1\} = (0, \nu]$ for some $\mu,\nu >0$. \\

The previous proposition (applied to $\gamma\left\vert_{[0,1-\delta]}\right.$ for any $\delta >0$) shows the existence of a (possibly void) countable set $\{ t_j\,;\, 1\leq j\leq K\}\subset (0,1)$, with either $K<+\infty$, or $K=+\infty$ and $t_j\to 1$ as $j\to \infty$, such that any determination of $E(\beta)$ in $\Omega$ can be extended into a holomorphic function on ${\mathcal W}:={\mathcal U}\cup \omega_1\cup\dots\cup\omega_K$, where ${\mathcal U}$ is a complex neighborhood of $\gamma\backslash\{ t_j\, ;\, 1\leq j\leq K\}$, and $\omega_j$ ($j=1,\dots,K$) is a $\gamma$-semi-neighborhood of $\gamma(t_j)$.\\

Still denoting by $E(\beta)$ such an extension, and with $E_n(\beta)$ as in (\ref{calic}), we have,
\begin{proposition}\sl
\label{betapetit}
There exists an integer $n\geq 0$, such that, for all $\beta\in{\mathcal W}$ sufficiently small, one has,
$$
E(\beta)=E_n(\beta).
$$
\end{proposition}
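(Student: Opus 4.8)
The plan is to prove that, as $\beta\to 0$ inside $\mathcal{W}$, the eigenvalue $E(\beta)$ stays bounded and tends to one of the harmonic levels $E_n(0)=2n+1$; the coincidence with $E_n(\beta)$ will then follow from (\ref{calic}). Thus the argument has three ingredients: (a) an a priori bound on $E(\beta)$ near $0$; (b) the identification of the cluster values of $E(\beta)$ at $0$ with $\sigma(H_0)$, via the norm-resolvent convergence $H(be^{i\theta})\to H_0$ recalled above; and (c) the local rigidity furnished by (\ref{calic}), propagated by analyticity.

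The main obstacle is the bound (a). One cannot simply quote Proposition \ref{bounded}: its proof rescales around a limit $\beta_\infty=b_\infty e^{i\theta_\infty}$ with $b_\infty>0$, and the local node count $N_0$ it produces degenerates to $0$ when $b_\infty=0$, which is precisely our situation. I would argue again by contradiction, assuming $|E(\beta_k)|\to\infty$ along some $\beta_k\to 0$ in $\mathcal{W}$, and deduce that the number of nodes $N(\psi_{\beta_k})\to\infty$, contradicting Proposition \ref{nbrenode}. The decisive dichotomy is the balance between $|E(\beta_k)|$ and $|\beta_k|^{-1}$ (the threshold at which the cubic term $\sqrt{|\beta_k|}|x|^3$ overtakes $|x|^2$ at the harmonic turning points of size $|E(\beta_k)|^{1/2}$). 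In the \emph{cubic-dominated} regime $|E(\beta_k)|\gtrsim|\beta_k|^{-1}$, the scheme of Proposition \ref{bounded} still applies provided one performs the extra dilation $y\mapsto |\beta_k|^{-1/6}y$, which turns the degenerating limit problem into the canonical cubic one $-\partial_y^2+iy^3-\lambda_0$ (so that the lower-half-plane localization of the zeros is preserved) at the cost of an extra factor $|\beta_k|^{-1/6}$; the resulting node count is of size $|E(\beta_k)|^{5/6}|\beta_k|^{-1/6}\gtrsim|\beta_k|^{-1}\to\infty$. In the \emph{harmonic-dominated} regime $|E(\beta_k)|\ll|\beta_k|^{-1}$, which is genuinely new, the cubic term is negligible near the turning points and a Bohr--Sommerfeld (or Hermite-comparison) count yields $\sim |E(\beta_k)|/2\to\infty$ near-real zeros. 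In both cases Proposition \ref{zero1} is what makes these zeros honest nodes: the forbidden strip above $\R$ has height $\tfrac{2}{3\sqrt{|\beta_k|}}\cos(\tfrac{\arg\beta_k}{2})\to+\infty$, so every zero of bounded imaginary part lies in $\C_-$. Making the two counts uniform across the crossover $|E(\beta_k)|\sim|\beta_k|^{-1}$ is the delicate point.

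Granting (a), step (b) is soft. If $E_*$ is a cluster value of $E(\beta)$ as $\beta\to 0$ and $E_*\notin\sigma(H_0)$, then by norm-resolvent convergence $(H(\beta)-z)^{-1}$ stays uniformly bounded for $z$ near $E_*$, which is incompatible with $E(\beta)\in\sigma(H(\beta))$ accumulating at $E_*$; hence every cluster value lies in $\sigma(H_0)=\{2n+1\}_{n\ge 0}$. Since $E(\beta)$ is continuous and bounded on a connected punctured neighborhood of $0$ in $\mathcal{W}$, its cluster set at $0$ is a nonempty compact connected set contained in the discrete set $\{2n+1\}_{n\ge 0}$, hence a single point $E_n(0)$. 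Therefore $E(\beta)\to E_n(0)$ as $\beta\to 0$ in $\mathcal{W}$, for one fixed $n$.

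Finally, for (c), once $E(\beta)\to E_n(0)$ we have $|E(\beta)-E_n(0)|<\delta_n$ for all $\beta\in\mathcal{W}$ close enough to $0$; as $E(\beta)\in\sigma(H(\beta))$, (\ref{calic}) forces $E(\beta)=E_n(\beta)$ on the segment $(0,\nu]$ near $0$. Both $E$ and $E_n$ being holomorphic on a connected neighborhood of that segment inside $\mathcal{W}\cap\{|\beta|<\delta_n\}$ and agreeing on a set with an accumulation point, the identity theorem gives $E(\beta)=E_n(\beta)$ throughout, which is the claim.
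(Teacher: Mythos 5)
Your proposal is correct and follows essentially the same route as the paper's proof: the paper likewise argues by contradiction through node counting along a sequence $\beta_k\to 0$, splitting it at exactly your threshold into ${\mathcal P}_-=\{|E(\beta)|\le \beta^{-1}\}$ and ${\mathcal P}_+=\{|E(\beta)|\ge \beta^{-1}\}$ with the same two rescalings (giving $-h^2\varphi''+(y^2+i\alpha y^3-\lambda)\varphi=0$, resp. $-h^2\varphi''+(\alpha y^2+iy^3-\lambda)\varphi=0$ with $\alpha\in(0,1]$, i.e.\ your ``extra dilation''), using Proposition \ref{zero1} to turn the semiclassical zeros into honest nodes, Proposition \ref{nbrenode} for the contradiction, and then norm-resolvent convergence, (\ref{calic}) and analytic continuation to conclude. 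The only inessential difference is that the paper takes the sequence on the real segment $(0,\nu]\subset{\mathcal W}$ and merely extracts a convergent subsequence --- the reality $\beta_k>0$ yielding $\Re E(\beta_k)\ge 1$, hence $|\arg\lambda|<\pi/2$ and the turning-point localization, for free --- whereas your cluster-set argument establishes the full limit for complex $\beta\to 0$ in ${\mathcal W}$, which is slightly more than is needed and would oblige you to supply that positivity input (and your crossover uniformity) separately.
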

\begin{remark} \sl In particular, $E(\beta)$ can be continued analytically in $\{ \beta\in\C_c\, ;\, |\beta| <\delta_n\}$, and thus, as a matter of fact,  the branch points cannot accumulate at $\beta=0$.
\end{remark}
\begin{proof} By construction,  ${\mathcal W}$ contains a sequence $(\beta_k)_{k\geq 0}\subset (0,\nu]$ that converges to 0.
We distinguish two cases.\\

{\bf Case 1}  : $|E(\beta_k)|$ does not tend to infinity as $k\to +\infty$.\\

In this case, by extracting a subsequence, we can assume that $E(\beta_k)$ admits a finite limit $E_0$ as $k\to \infty$. But since $H(\beta_k)$ tends to $H_0$ in the norm resolvent sense, we necessarily have $E_0\in \sigma (H_0)$, that is, there exists $n\geq 0$ such that $E_0=E_n(0)$. Using (\ref{calic})-(\ref{calicbis}), we deduce that $E(\beta_k) =E_n(\beta_k)$ for all $k$ sufficiently large. By the analyticity of $E(\beta)$ on ${\mathcal W}$, and the simplicity of $E_n(\beta)$ ($|\beta|<\delta_n$), we conclude that $E(\beta)=E_n(\beta)$ for all $\beta$ sufficiently close to some $\beta_k$ ($k>>1$), and thus (since $E_n(\beta)$ is analytic, too), for all $\beta\in {\mathcal W}$ sufficiently small.\\

{\bf Case 2} : $|E(\beta_k)|$ tends to infinity as  $k\to +\infty$.\\

We separate the elements of the sequence $B:=(\beta_k)_{k\geq 0}$ in two parts:
$$
{\mathcal P}_-:=\{ \beta \in B\, ;\, |E(\beta)|\leq \beta^{-1}\}\quad ;\quad {\mathcal P}_+:=\{ \beta \in B\, ;\, |E(\beta)|\geq \beta^{-1}\}.
$$
We will prove by contradiction that, actually, both of these sets are empty. \\

When $\beta\in {\mathcal P}_-$, we make the change of variable,
$$
x=\sqrt{|E(\beta)|}\, y,
$$
and we set,
\begin{eqnarray*}
&&\varphi (y) := |E(\beta)|^{\frac14}\psi_\beta (\sqrt{|E(\beta)|}\, y) \,\, ;\,\, h:= |E(\beta)|^{-1} (<<1)\\
&&  \alpha:=\sqrt{\beta |E(\beta)|}\in (0,1]\,\, ;\, \, \lambda :=\frac{E(\beta)}{|E(\beta)|}.
\end{eqnarray*}
Then, we have,
$$
-h^2\varphi''(y) + (y^2+i\alpha y^3 -\lambda)\varphi (y) =0,
$$
and we can use again complex WKB semiclassical methods. Since $\beta>0$, we have  $\Re H(\beta) =H_0 \geq 1$ and thus $\Re E(\beta) \geq 1$. Therefore, $|\arg \lambda |<\pi /2$ and, among the three turning points, two of them are located in $\{ \Im y \leq 1/\sqrt2 \}$, while the third one is in $\{ \Im y \geq 1\}$. Moreover, in this case Proposition \ref{zero1} gives an absence of zeros for $\varphi$ in the strip $\{ 0\leq\Im y\leq \frac3{2\sqrt{\beta|E(\beta)|}}\}$, and thus in $\{ 0\leq\Im y\leq 1\}$. Then, an argument similar to the one leading to (\ref{asymptzero}) shows that 
 $N_\psi$ tends to infinity as $\beta \to 0_+$, and this is in contradiction with Proposition \ref{nbrenode}.
\vskip 0.5cm

Now, when $\beta\in {\mathcal P}_+$, we make the change of variable,
$$
x= \frac{|E(\beta)|^{1/3}}{\beta^{1/6}}y,
$$
and we set,
\begin{eqnarray*}
&& \varphi (y) := \frac{|E(\beta)|^{1/6}}{\beta^{1/12}}\psi_\beta (\frac{|E(\beta)|^{1/3}}{\beta^{1/6}}\, y) \,\, ;\,\, h:= \frac{\beta^{1/6}}{|E(\beta)|^{5/6}} (<<1)\\
&& \alpha:= (\beta|E(\beta)|)^{-1/3}\in (0,1] \,\, ;\, \, \lambda :=\frac{E(\beta)}{|E(\beta)|}.
\end{eqnarray*}
In this case, $\varphi$ is solution of,
$$
-h^2\varphi''(y) + (\alpha y^2+iy^3 - \lambda)\varphi (y) =0.
$$
Again, complex semiclassical WKB asymptotics can be used, and we obtain  that $N_\psi$ should tend to infinity as $\beta\to 0_+$, which is in contradiction with Proposition \ref{nbrenode}. (Let us observe that, by an obvious change of variable, in both cases the turning points can be deduced from those of $y^2+iy^3-\lambda'$, with $\Re\lambda' >0$.)
\end{proof}

\section{Holomorphic extension}

In view of Proposition \ref{betapetit},  the first part of Theorem \ref{mainth} will follow from the next result.\\
\begin{proposition}\sl
\label{holomext}
For any $n\geq 0$, the function,
$$
\C_c\cap\{ |\beta|  <\delta_n\}\ni \beta \mapsto E_n(\beta)
$$
can be extended  to a holomorphic function on the whole cut plane $\C_c$, in such a way that, for all $\beta\in\C_c$ the value at $\beta$ of the extension (still denoted by $E_n(\beta)$) is a simple eigenvalue of $H(\beta)$.
\end{proposition}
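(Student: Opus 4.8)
The plan is to fix $n\geq 0$ and propagate the perturbative eigenvalue $E_n$ outward from a neighbourhood of $\beta=0$, showing that it stays simple, hence branch-free, along every path. Near $0$ the eigenvalue $E_n(\beta)$ is simple and isolated by (\ref{calic}), and its eigenfunction is a small perturbation of the $n$-th Hermite function of $H_0$; the $n$ real zeros of the latter move into the open lower half-plane $\C_-$ for $\beta\in\C_c$ small and remain confined there (Propositions \ref{zero1} and \ref{zero2}), so that the node number of Definition \ref{defnode} equals $N_\psi=n$. By Proposition \ref{nbrenode} this integer is the invariant that will label the branch throughout. Since $\C_c$ is simply connected, it suffices to show that the continuation of $E_n$ has no branch points and remains a simple eigenvalue; single-valuedness and holomorphy on all of $\C_c$ then follow automatically.

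For the continuation itself I would take, for an arbitrary target $\beta^*\in\C_c$, a simple path $\gamma$ joining $\beta^*$ to $0$ through the positive reals $(0,\nu]$, exactly as in Section \ref{secbetapeti}. Proposition \ref{seminei} then extends $E_n$ holomorphically along $\gamma$ up to finitely many a priori branch points, continuously up to $\beta^*$, while Propositions \ref{bounded} and \ref{nbrenode} guarantee that the extension stays bounded and that its node number stays equal to $n$ along the whole continuation.

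The core of the argument is a first-collision analysis ruling out branch points. Suppose, for contradiction, that as we move along $\gamma$ the eigenvalue $E_n(\beta)$ first becomes multiple at some parameter $\beta_1$, colliding there with a distinct simple branch $\mu(\beta)$ just before $\beta_1$. Since by \cite{Si} the geometric multiplicity of every eigenvalue is one, the two branches must share the same limiting eigenfunction $\Psi$ at $\beta_1$; as the nodes neither escape to infinity nor reach the real axis near $\beta_1$ (Propositions \ref{zero1} and \ref{zero2}), the node number passes to the limit and $\mu$ therefore also carries node number $n$. Continuing $\mu$ back along $\gamma$ towards $0$, Proposition \ref{betapetit} identifies it near $0$ with some $E_m$, whose eigenfunction has exactly $m$ nodes; constancy of the node number forces $m=n$. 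Thus both $E_n(\beta)$ and $\mu(\beta)$ lie in the disk $\{|z-(2n+1)|<\delta_n\}$ for small $\beta>0$, and by (\ref{calic}) they coincide there. Analyticity along the common, branch-free portion of $\gamma$ before $\beta_1$ then propagates this equality and forces $\mu\equiv E_n$, contradicting the assumed distinctness. Hence no collision occurs, $E_n$ stays simple, and there are no branch points; single-valuedness on the simply connected $\C_c$ yields the desired holomorphic extension, simple at every $\beta\in\C_c$.

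The step I expect to be the main obstacle is the node-counting input to the collision argument, namely the claim that $N_\psi$ is genuinely constant and passes to the node number of the (unique, since geometrically simple) limiting eigenfunction at $\beta_1$. Everything hinges on the nodes neither migrating into the upper half-plane nor escaping to infinity, uniformly along the continuation; this confinement is precisely what Propositions \ref{zero1}, \ref{zero2} and \ref{nozerosecteur} supply, and turning these qualitative statements into a stable integer count, via the contour-integral formula (\ref{nbrezeros}), is the technical heart of the proof.
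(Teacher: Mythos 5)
Your proposal is correct and takes essentially the same route as the paper's own proof of Proposition \ref{holomext}: continuation along a path via Proposition \ref{seminei}, a first-failure-of-simplicity (collision) argument in which the second branch is continued back to $\beta=0$ and identified with some $E_m$ by Proposition \ref{betapetit}, and the node-count invariant of Proposition \ref{nbrenode} combined with the geometric simplicity of eigenvalues (Sibuya) and the fact, proved in the paper as Lemma \ref{nnodes} via the contour count you anticipate, that the perturbative eigenfunction has exactly $n$ nodes, forcing $m=n$. The only difference is presentational: you conclude $\mu\equiv E_n$ and contradict distinctness, while the paper derives $m\neq n$ from distinctness and contradicts $m=n$ -- logically the same contradiction.
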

\begin{proof} It is enough to prove that $E_n(\beta)$ can be continued, as a simple eigenvalue of $H(\beta)$, along any simple continuous path $\gamma\, :\, [0,1]\to \C_c$ such that $|\gamma (0)|<\delta_n$.\\

By Proposition \ref{seminei}, we already know that $E_n(\beta)$ can be extended, as an eigenvalue of $H(\beta)$, into a continuous function on $\gamma$ (and, in fact, holomorphic at least on `one side' of $\gamma$).
We set,
$$
t_0:= \sup\{ t\in  [0,1]\, ;\, E_n(\beta) \mbox{ is simple at } \gamma (s) \mbox{ for all } s\in[0,t]\}.
$$
By contradiction, let us assume $t_0<1$. \\

Since $E_n(\beta)$ is continuous along $\gamma$, this implies that $E_n(\beta_0)$ is a multiple eigenvalue of $H(\beta_0)$, where we have set $\beta_0:=\gamma (t_0)$. Hence, because of the analyticity of the family $H(\beta)$, in a neighborhood of $\beta_0$ the spectrum of $H(\beta)$ is given either by at least two branches of a multi-valued analytic function around $\beta_0$, or by at least two different analytic functions (see \cite{K}, Chapter VII).\\

In particular, for  $\beta$ in a sufficiently small  neighborhood $\Omega_1$ of $\{\gamma (t)\, ;\, 0<t_0-t_1<<1\}$, there exists an eigenvalue $E(\beta)$, depending analytically on $\beta$, such that 
 $E(\beta)\not= E_n(\beta)$ for all $\beta\in\Omega_1$, and $E(\beta)\to E_n(\beta_0)$ as $\beta\to\beta_0$ (recall that, by definition, $E_n(\gamma (t))$ is simple for $t<t_0$).\\
 
On the other hand, by Sections \ref{prelim} and \ref{secbetapeti}, we know that $E(\beta)$ can be extended, as an eigenvalue of $H(\beta)$, into a holomorphic function on the neighborhood of some simple continuous path connecting $\Omega_1$ with a neighborhood of 0,
and that there exists some $m\geq 0$ such that $E(\beta) =E_m(\beta)$ for $|\beta|$ sufficiently small. \\

Then, by construction, we necessarily have $m\not= n$, otherwise,  by the simplicity (and thus, the analyticity) of $E_n(\beta)$ on $\{ \gamma (t)\, ;\, 0\leq t<t_0\}$, we would have $E(\beta)=E_n(\beta)$ on this set.\\

Now, we denote by $\psi_{n,\beta}$ and $\psi_{m,\beta}$ two normalized eigenfunctions associated with $E_n(\beta)$ and $E(\beta)=E_m(\beta)$, respectively, and depending continuously on $\beta$ (see Proposition \ref{contfonctpropre}). By Proposition \ref{nbrenode}, we know that their respective number of nodes, say $N_n$ and $N_m$, do not depend on $\beta$. Moreover, since $E(\beta_0) =E_n(\beta_0)$ and Ker$(H(\beta_0) - E_n(\beta_0))$ is one dimensional, we must have $\psi_{n,\beta_0}$ and $\psi_{m,\beta_0}$ co-linear, and thus, necessarily, $N_n=N_m$. Therefore, in order to obtain a contradiction, it is enough to prove,
\begin{lemma}\sl
\label{nnodes}
For any $n\geq 0$, the eigenfunction associated to $E_n(\beta)$ ($\beta\in\C_c$, $|\beta|<\delta_n$) admits exactly $n$ nodes.
\end{lemma}
\begin{proof} Denote by $\varphi_n$ the $(n+1)$-th normalized eigenfunction of the harmonic oscillator $H_0$. It is well known that $\varphi_n$ is holomorphic on $\C$, and admits exactly $n$ zeros that are all located in the real interval $I_n:=[-\sqrt{E_n(0)}, \sqrt{E_n(0)}]$. Moreover, since $H(\beta)$ tends to $H_0$ in the norm resolvent sense as $\beta\to 0$, and $E_n(0)$ is simple, the conveniently normalized eigenfunction $\psi_{n,\beta}$ associated to $E_n(\beta)$ ($|\beta|<\delta_n$) converge in norm towards $\varphi_n$ as $\beta\to 0$. By the ellipticity of $H_0$ and $H(\beta)$ this actually implies that $\psi_{n,\beta}-\varphi_n$ tends to 0 in any Sobolev norm, and thus locally uniformly (together with its derivative). As a consequence, if $C$ is an arbitrarily large positive constant,  for all $|\beta|$ small enough one has $\psi_{n,\beta}(z)\not=0$ if $C^{-1}\leq {\rm dist}\, (z, I_n) \leq C$, and the number $N$ of zeros of $\psi_{n,\beta}$ in $\{ {\rm dist}\, (z, I_n) \leq C^{-1}\}$ is given by,
\begin{eqnarray*}
&&N=\frac1{2i\pi}\oint_{\{ {\rm dist}\, (z, I_n) = C^{-1}\}} \frac{\psi_{n,\beta}'(x)}{\psi_{n\beta}(x)}dx\\
&& \hskip 0.5cm \longrightarrow  \frac1{2i\pi}\oint_{\{ {\rm dist}\, (z, I_n) = C^{-1}\}} \frac{\varphi_n'(x)}{\varphi_n(x)}dx =n\quad (\beta\to 0).
\end{eqnarray*}
Therefore, $N=n$ and, by Proposition \ref{nbrenode}, these zeros are exactly the nodes of $\psi_{n,\beta}$.
\end{proof}
It follows from Lemma \ref{nnodes} and the previous discussion that $R_\varepsilon=+\infty$. Since $\varepsilon >0$ is arbitrary, we conclude that $E_n(\beta)$ remains simple for all $\beta\in\C_c$, and Proposition \ref{holomext} is proved.
\end{proof}

Putting together Proposition \ref{betapetit} and Proposition \ref{holomext}, we obtain the fact that, for any $\beta\in\C_c$, the spectrum of $H(\beta)$ is simple, and consists exactly of the holomorphic extensions of the perturbative eigenvalues $E_n(\beta)$'s defined for $|\beta|$ small. In the rest of the paper, we will prove the last part of Theorem \ref{mainth}, that is, the Pad\'e summability of the $E_n(\beta)$'s. In order to do so, we first need to know the asymptotic behaviours of $E_n(\beta)$ as $\arg \beta \to (\pm \pi)_{\mp}$ and as $|\beta|\to \infty$.

\section{Beheviour on the cut}\label{seccut}

From now on, we fix $n\geq 0$, and we consider the simple eigenvalue $E_n(\beta)$ defined for all $\beta\in\C_c$. 
We also observe that, since $H(\beta )$ admits the symmetry,
$$
{\mathcal P}{\mathcal T} H(\beta) =H(\overline{\beta}){\mathcal P}{\mathcal T},
$$
(where we have denoted by ${\mathcal P}\psi(x):=\psi(-x)$ the parity operator, and by ${\mathcal T}\psi =\overline{\psi}$ the
time reversal), the simplicity of $E_n(\beta)$ and the characterization by the number of nodes immediately give,
\be
\label{ptsym}
E_n(\overline{\beta})= \overline{E_n(\beta)}
\ee
for all $\beta\in \C_c$. In particular, for $\beta>0$ we recover the fact that the spectrum of $H(\beta)$ is real (and thus included in $[1, +\infty)$, since $\Re H(\beta) \geq 1$).
\vskip 0.3cm

By the proof of Proposition \ref{bounded} and (\ref{calicbis}), we also know that, for any $C>0$, $E_n(\beta)$ remains uniformly bounded on $\{ \beta\in\C_c\, ;\, |\beta|\leq C\}$. In this section, we study more precisely the behavior of $E_n(\beta)$ as $\beta$ approaches the cut $(-\infty, 0]$ of $\C_c$.

\begin{proposition}\sl
\label{behavcut}
For any $b>0$, $E_n(\beta)$ admits a finite limit $E_n^{\pm}(-b)$ as $\beta\to -b$, $\pm\Im \beta>0$, and
the  $E_n^{\pm}(-b)$'s form the set of  eigenvalues (all simple) of the operator,
$$
\tilde H_{\pm \alpha}(-b):= -e^{\pm 2i\alpha}\frac{d^2}{dx^2}+e^{\mp 2i\alpha}x^2 \mp \sqrt{b}\,e^{\mp 3i\alpha}x^3
$$
where $\alpha$ is any sufficiently small positive number, and the domain ${\mathcal D}$ of $\tilde H_{\pm \alpha}(-b)$ is the same as the one of $H(\beta)$. Moreover, one has,
$$
\pm \Im E_n^{\pm}(-b) >0.
$$
 \end{proposition}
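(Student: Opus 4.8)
The plan is to give $E_n^\pm(-b)$ a double description: as a boundary value of $E_n(\beta)$ and as an eigenvalue of the dilated operator $\tilde H_{\pm\alpha}(-b)$, and to read the three assertions (finite limit, identification with $\sigma(\tilde H_{\pm\alpha}(-b))$ together with simplicity, and the sign of the imaginary part) off this double description. I treat the $+$ case, the $-$ case being entirely analogous by the symmetry (\ref{ptsym}).

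First I would set up the complex dilation. For real $\alpha$ the unitary operator $U_\alpha\psi(x):=e^{-i\alpha/2}\psi(e^{-i\alpha}x)$ conjugates $H(\beta)$ into
\be
H_\alpha(\beta):=U_\alpha H(\beta)U_\alpha^{-1}=-e^{2i\alpha}\frac{d^2}{dx^2}+e^{-2i\alpha}x^2+i\sqrt\beta\,e^{-3i\alpha}x^3,
\ee
so that $\sigma(H_\alpha(\beta))=\sigma(H(\beta))$ and $U_\alpha\psi_\beta$ is an eigenfunction of $H_\alpha(\beta)$ for the eigenvalue $E_n(\beta)$. The key point is that, unlike $H(\beta)$, the right-hand side continues across the cut: as $\beta\to -b$ with $\Im\beta>0$ one has $\sqrt\beta\to i\sqrt b$, hence $i\sqrt\beta\to -\sqrt b$ and $H_\alpha(\beta)\to\tilde H_{+\alpha}(-b)$. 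For $\alpha>0$ small, $H_\alpha(\beta)$ is still an analytic family of type A on $\dom$ with compact resolvent, uniformly up to $\beta=-b$: indeed $H_\alpha(\beta)\Phi=E\Phi$ is the equation of Section \ref{prelim} read along the ray $e^{-i\alpha}\R$, whose two sectors of subdominance are the sectors $S_{\pm1}(\beta)$ of (\ref{secteurs}) rotated by $-\alpha$; for $\alpha$ small $\R$ still lies in their union, so the spectrum is discrete, cut out by the Stokes--Sibuya coefficient, and the resolvent is compact. Consequently $H_\alpha(\beta)\to\tilde H_{+\alpha}(-b)$ in the norm-resolvent sense.

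Second, existence of the limit and identification. Here I would work with the entire Stokes--Sibuya coefficient $C(\sqrt\beta,E)$ (Remark after Proposition \ref{contfonctpropre}), whose zeros in $E$ are exactly the eigenvalues, together with the fact (proof of Proposition \ref{bounded}, recalled at the start of Section \ref{seccut}) that $E_n(\beta)$ stays in a fixed compact set as $\beta\to -b$. Since $\sqrt\beta\to i\sqrt b$, the entire function $C(\sqrt\beta,\cdot)$ converges locally uniformly to $C(i\sqrt b,\cdot)\not\equiv 0$; by the a priori boundedness no zero escapes to infinity, so Hurwitz's theorem forces the simple zero $E_n(\beta)$ to converge to a zero $E_n^+(-b)$ of $C(i\sqrt b,\cdot)$. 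By construction $C(i\sqrt b,\cdot)$ is also the Stokes--Sibuya coefficient of $\tilde H_{+\alpha}(-b)$ (the substitution $i\sqrt\beta\mapsto-\sqrt b$ turns the potential into $x^2-\sqrt b\,x^3$, which is the one of $\tilde H_{+\alpha}(-b)$ after undoing the dilation), so its zeros are precisely the eigenvalues of $\tilde H_{+\alpha}(-b)$; this also shows the limit is independent of $\alpha$. Simplicity of each such eigenvalue follows from the node count: a multiple zero of $C(i\sqrt b,\cdot)$ would arise, again by Hurwitz, either as the limit of a multiple zero of $C(\sqrt\beta,\cdot)$, which is excluded since every $E_n(\beta)$ is simple, or from two branches $E_n,E_{n'}$ coalescing; but the number of nodes, computed in the rotated picture where Propositions \ref{nbrenode} and \ref{zero2} stay non-degenerate as $\beta\to-b$, is preserved in the limit, forcing $n=n'$, a contradiction. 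Hence the $E_n^+(-b)$ are simple and exhaust $\sigma(\tilde H_{+\alpha}(-b))$.

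Third, the sign, which I expect to be the main obstacle. By (\ref{ptsym}) one has $E_n^-(-b)=\overline{E_n^+(-b)}$, so it suffices to prove $\Im E_n^+(-b)>0$. The natural tool is the exact identity obtained from (\ref{psi'psi}) on letting $s\to-\infty$, namely
\be
\Im E_n(\beta)=\sqrt b\,\cos\left(\tfrac{\arg\beta}{2}\right)\int_\R x^3|\psi_\beta(x)|^2\,dx .
\ee
As $\arg\beta\to\pi$ the prefactor tends to $0$ while the subdominance of $\psi_\beta$ on $\R_+$ degenerates, the formal limit potential $x^2-\sqrt b\,x^3$ being unbounded below as $x\to+\infty$; the mass of $|\psi_\beta|^2$ leaks towards $+\infty$, the integral diverges, and the product should have a finite positive limit equal to the resonance width. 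Equivalently, and this is the formulation I would make rigorous, $\tilde H_{+\alpha}(-b)$ is the dilation of $-\frac{d^2}{dx^2}+x^2-\sqrt b\,x^3$ by the negative angle $-\alpha$, and such a dilation can only uncover eigenvalues in the open upper half-plane; strict positivity (non-reality) then follows because a real boundary value would, by (\ref{ptsym}), give $E_n^+(-b)=E_n^-(-b)$ and let $E_n$ cross the cut holomorphically, which is incompatible with the $\beta^{1/5}$ behaviour at infinity established in Section 7. The soft parts above are essentially a consequence of the analyticity of the Stokes--Sibuya coefficient combined with the node count and the boundedness of $E_n(\beta)$; the hard part is this last quantitative control of where the rotated eigenfunction concentrates, i.e. of the sign of the resonance width, which is a complex-WKB computation of the same nature as the one used in Proposition \ref{bounded}.
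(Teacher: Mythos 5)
Your first two steps are sound and close in substance to the paper's: the paper also passes to the rotated operator $\tilde H_\alpha(\beta)$ (noting that for $y$ real one has $e^{-i\alpha}y\in S_{\pm1}(\beta)$ uniformly for $|\pi-\arg\beta|$ small, with no restriction on the sign of $\pi-\arg\beta$), shows it is an analytic family of type A with compact resolvent up to $\arg\beta=\pi$, and gets the existence of the limit from norm-resolvent convergence together with the uniform boundedness and simplicity of $E_n(\beta)$ (a bounded continuous branch has a connected cluster set inside a discrete spectrum, hence a single limit point); your Stokes--Sibuya/Hurwitz formulation of the same facts is a legitimate variant, and your simplicity-via-node-count argument is exactly the paper's ``as before''. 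One small inaccuracy: $U_\alpha\psi(x)=e^{-i\alpha/2}\psi(e^{-i\alpha}x)$ is a \emph{complex} dilation, not a unitary operator, so $\sigma(H_\alpha(\beta))=\sigma(H(\beta))$ is not automatic; it does hold for $\alpha$ small, but precisely by the subdominance-sector argument you give, not by unitarity.

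The genuine gap is the sign $\Im E_n^+(-b)>0$, where neither of your routes works. The assertion that ``a dilation by a negative angle can only uncover eigenvalues in the open upper half-plane'' begs the question here: the undilated potential $x^2-\sqrt b\,x^3$ is limit-circle at $+\infty$, all of its self-adjoint extensions have purely real discrete spectrum, and the numerical range of $\tilde H_{+\alpha}(-b)$ is not contained in any half-plane (the term $-\sqrt b\,e^{-3i\alpha}\la x^3u,u\ra$ has imaginary part of either sign), so no soft sectoriality or complex-scaling lore confines the eigenvalues; excluding real eigenvalues is exactly what must be proved. Your strictness patch via (\ref{ptsym}) and reflection would only bite if $\Im E_n^+(-b)=0$ on a whole interval of $b$'s (a single real boundary value gives no holomorphic continuation across the cut), so isolated real values survive your argument, while the proposition asserts strict positivity for \emph{every} $b>0$; and the fallback complex-WKB computation of the resonance width can only give the $b\to0_+$ asymptotics (this is how the paper later uses \cite{HS} for $\rho_n$ at infinity), not positivity at fixed $b$. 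The paper's actual argument is exact and pointwise in $b$: by Sibuya's asymptotics, the eigenfunction $\varphi_n^+$ of $\tilde H_\alpha(-b)$ pulled back to the real axis, $\psi_n^+(x):=\varphi_n^+(e^{i\alpha}x)$, solves $-(\psi_n^+)''+(x^2-\sqrt b\,x^3)\psi_n^+=E_n^+(-b)\psi_n^+$ with the \emph{outgoing} behavior $(\psi_n^+)'/\psi_n^+=-ib^{1/4}x^{3/2}(1+o(1))$ as $x\to+\infty$ and exponential decay at $-\infty$; integrating the Wronskian identity as in (\ref{psi'psi}) gives, exactly,
$$
\Im E_n^+(-b)\int_{-\infty}^x|\psi_n^+(t)|^2\,dt=-\Im (\psi_n^+)'(x)\overline{\psi_n^+(x)}=b^{1/4}x^{3/2}|\psi_n^+(x)|^2(1+o(1)),
$$
which is strictly positive for $x$ large, whence $\Im E_n^+(-b)>0$. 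So your identity on the real line for $\Im\beta>0$ is the right object, but the limit must be taken through the rotated frame, where the boundary condition at $+\infty$ becomes an exact outgoing condition and the sign is read off with no estimate on where the mass concentrates.
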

 \begin{proof}
We prove it for $\Im \beta >0$ (the case $\Im\beta <0$ is completely analogous, and also results from (\ref{ptsym})). In that case, for any fixed $\alpha>0$ small enough ($\alpha<\pi/5$ is enough), the complex change of variable $x=e^{-i\alpha}y$ transforms $H(\beta)$ into,
$$
\tilde H_\alpha (\beta):= -e^{2i\alpha}\frac{d^2}{dy^2} + e^{-2i\alpha}y^2 +i\sqrt{\beta}e^{-3i\alpha}y^3.
$$
and the  eigenfunction $\psi_{n,\beta}(x)$, associated with $E_n(\beta)$, becomes,
$$
\varphi_{n,\beta}(y):= \psi_{n,\beta}(e^{-i\alpha}y).
$$
(Observe that, with  $S_{\pm 1}(\beta)$ defined in (\ref{secteurs}),  when $y$ is real we have $e^{-i\alpha}y\in S_{\pm 1}(\beta)$ for all $\beta$ such that $|\pi -\arg\beta|$ is small enough, without restriction on the sign of $(\pi -\arg\beta)$.)\\

Writing $\beta =b'e^{i\theta}$ with $|\pi -\theta |<< 1$, we have,
\be
\label{Htildebis}
\tilde H_\alpha (\beta):= e^{2i\alpha}(-\frac{d^2}{dy^2} + e^{-4i\alpha}y^2 -\sqrt{b'}e^{-i(\frac{\pi-\theta}2+5\alpha)}y^3),
\ee
and thus, since $\frac{\pi-\theta}2+5\alpha\geq 5\alpha >0$, it is not difficult to see (e.g., as in \cite{CGM}) that $\tilde H_\alpha (\beta)$ has a compact resolvent (even for $\arg\beta =\pi$), that it forms an analytic family of type A around $\beta=-b$, and that $\tilde H_\alpha (\beta)$ tends to $\tilde H_\alpha (-b)$ in the norm resolvent sense, as $\beta\to -b$, $\Im\beta >0$. As a consequence, any cluster point of $E_n(\beta)$ is necessarily in the spectrum of $\tilde H_\alpha (-b)$. Since in addition $E_n(\beta)$ is simple when $\Im\beta >0$ and remains uniformly bounded as $\beta\to-b$, $\Im \beta >0$, we conclude that it can admit only one cluster point, that is, a limit $E_n^+(-b) \in \sigma (\tilde H_\alpha (-b))$, as $\beta\to-b$, $\Im \beta >0$.\\

Moreover, by considering the number of nodes of $\psi_{n,\beta}$, we see as before that all the eigenvalues of $\tilde H_{\pm\alpha}(\beta)$ are simple and depend analytically on $\beta$ near $-b$.\\

Concerning the sign of $\Im E_n^+(-b)$, we see on the expression (\ref{Htildebis}) that the  eigenfunction $\varphi_n^+$ of $\tilde H_\alpha (-b)$ associated with $E_n^+(\alpha)$ satisfies to the asymptotics (see \cite{Si}, Chapter 2),
$$
\frac{(\varphi_n^+)'(y)}{\varphi_n^+(y)} =  -b^{1/4}e^{i\frac{\pi - 5\alpha}2}y^{3/2}(1+o(1)),
$$
as $\Re y\to +\infty$ with $|\arg y|< \pi/5$. In particular, this behavior is valid for $y=e^{i\alpha}x$ with $x$ real, $x\to +\infty$, and in this case, setting $\psi_n^+(x):= \varphi_n^+(e^{i\alpha}x)$, we obtain that $\psi_n^+$ is a solution of the equation,
\be
\label{eqlim}
-(\psi_n^+)'' + (x^2 - \sqrt{b} \,x^3)\psi_n^+ =E_n^+(-b))\psi_n^+,
\ee
with the asymptotic behavior,
\be
\label{asymppsi+}
\frac{(\psi_n^+)'(x)}{\psi_n^+(x)} =  -ib^{1/4}x^{3/2}(1+o(1)),\quad (x\to +\infty).
\ee
Moreover, for the same reasons (and since the potential $x^2 - \sqrt{b} \,x^3$ tends to $+\infty$ as $x\to -\infty$), we see that $\psi_n^+(x)$ is exponentially small (together with all its derivatives) as $x\to -\infty$.
Then, using the equation (\ref{eqlim}), we have,
$$
\Im E_n^+(-b) \int_{-\infty}^x|\psi_n^+(t)|^2 dt= - \Im (\psi_n^+)'(x)\overline{\psi_n^+(x)}
$$
where $x>0$ is arbitrary, and thus, by (\ref{asymppsi+}),
$$
\Im E_n^+(-b) \int_{-\infty}^x|\psi_n^+(t)|^2 dt= b^{1/4}x^{3/2}|\psi_n^+(x)|^2(1+o(1)) \quad (x\to +\infty).
$$
In particular, taking $x>0$ sufficiently large, we obtain $\Im E_n^+(-b)  >0$.
\end{proof}

Let us also observe that, still by \cite{Si}, Chapter 2, one also has  $|\psi_n^+(x)|^2 \sim x^{-3/2}$ as $x\to +\infty$, so that, actually, $\psi_n^+ \in L^2(\R)$ (but $\notin {\mathcal D}$).

\section{Behaviour at infinity}

Thanks to the previous section (plus the results of \cite{CGM} for $\beta$ small), we can extend $E_n(\beta)$ in a continuous way up to $\arg\beta =\pm\pi$, by setting,
$$
E_n(be^{\pm i\pi}) := E_n^{\pm} (-b).
$$
(Of course, in this notation, the quantity $e^{\pm i\pi}$ cannot be replaced by $-1$.)

\vskip 0.3cm

In this section, we study the behavior of $E_n(\beta)$ as $|\beta| \to +\infty$, $-\pi \leq \arg \beta \leq \pi$.
We have,
\begin{proposition}\sl
\label{behavinfty}
For any $n\geq 0$, the quantity $\beta^{-1/5}E_n(\beta)$ admits a finite limit $L_n\geq 0$ as $|\beta| \to +\infty$, $-\pi \leq \arg \beta \leq \pi$. Moreover, if one sets,
$$
H_\infty := -\frac{d^2}{dx^2} + i x^3
$$
with domain ${\mathcal D}$, then, the spectrum of $H_\infty$ consists exactly of the set $\{L_n\,;\, n\geq 0\}$, and $L_n$ is the $(n+1)$-th eigenvalue of $H_\infty$.
\end{proposition}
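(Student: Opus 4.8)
The plan is to reduce the problem, through a $\beta$-dependent complex dilation, to a small perturbation of the fixed operator $H_\infty$, and then to transport the node information through this scaling exactly as in Sections \ref{prelim}--\ref{secbetapeti}. \textbf{Step 1 (rescaling).} Writing $\beta=be^{i\theta}$ ($b>0$, $|\theta|\le\pi$), I would perform in the equation $H(\beta)\psi_{n,\beta}=E_n(\beta)\psi_{n,\beta}$ the change of variable $x=\beta^{-1/10}y=b^{-1/10}e^{-i\theta/10}y$, and set $\varphi(y):=\psi_{n,\beta}(\beta^{-1/10}y)$ and $\mu:=\beta^{-1/5}E_n(\beta)$. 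A direct computation gives
\be
K(\beta)\varphi=\mu\varphi,\qquad K(\beta):=-\frac{d^2}{dy^2}+iy^3+\beta^{-2/5}y^2 .
\ee
Since $\arg(\beta^{-1/10})=-\theta/10$, the rays $\arg x=-\theta/10$ and $\arg x=\pi-\theta/10$ fall strictly inside the two subdominance sectors $S_{\pm1}(\beta)$ of (\ref{secteurs}) (the offset from the sector centers being $\pm\pi/10<\pi/5$); hence $\varphi$ is exponentially subdominant on $\R_\pm$ and is a genuine $L^2(\R)$ eigenfunction of $K(\beta)$, which has compact resolvent and domain $\mathcal D$.

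\textbf{Step 2 (boundedness of $\mu$).} I would first rule out $|\mu|\to\infty$. If $|\mu|\to\infty$ along a subsequence, I rescale once more, $y=|\mu|^{1/3}z$, and divide by $|\mu|$: this turns $K(\beta)$ into a semiclassical operator $-\tilde h^2 d^2/dz^2+iz^3+o(1)z^2$ with $\tilde h=|\mu|^{-5/6}\to0$, whose eigenvalue $\mu/|\mu|$ stays on the unit circle. The limiting turning-point polynomial $iz^3-\lambda_0$ is exactly the one ($i\sqrt{b_\infty}y^3-\lambda_0$) appearing in the proof of Proposition \ref{bounded}, so the complex WKB node-counting argument there (together with Lemma \ref{estphi'surphi}) applies verbatim and forces the number of nodes of $\varphi$ to grow like $\tilde h^{-1}$. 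This contradicts Proposition \ref{nbrenode} and Lemma \ref{nnodes}, by which this number is the constant $n$. Hence $\mu$ stays bounded as $|\beta|\to\infty$, uniformly for $-\pi\le\arg\beta\le\pi$.

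\textbf{Step 3 (convergence and identification).} Because $y^2$ is $H_\infty$-bounded with relative bound $O(b^{-2/5})\to0$ (using the separation estimate $\|y^3u\|+\|u''\|\le C(\|H_\infty u\|+\|u\|)$ on $\mathcal D$, whose modulus is independent of $\theta$), the family $K(\beta)$ converges to $H_\infty$ in the norm-resolvent sense as $|\beta|\to\infty$, uniformly in $\arg\beta$. Consequently every cluster point of the bounded quantity $\mu$ is an eigenvalue of $H_\infty$, and conversely every eigenvalue of $H_\infty$ is a limit of eigenvalues of $K(\beta)$. Running the node analysis of Section \ref{prelim} on the family $K(\beta)$ (whose leading part $iy^3$ governs the Sibuya asymptotics) shows that $H_\infty$ has simple discrete spectrum, each eigenvalue carrying a well-defined number of nodes, so that the eigenvalues are labelled bijectively by $n=0,1,\dots$; and, by the argument-principle stability of the node count under the locally uniform convergence $\varphi\to(\text{eigenfunction of }H_\infty)$ coming from ellipticity (as in Lemma \ref{nnodes}), the eigenfunction attached to $\mu$ retains exactly the $n$ nodes of $\psi_{n,\beta}$. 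Therefore $\mu$ has a unique cluster point, namely the eigenvalue $L_n$ of $H_\infty$ with $n$ nodes, so that $\beta^{-1/5}E_n(\beta)\to L_n$; combined with $\sigma(H(\beta))=\{E_j(\beta)\}_j$ this same node bijection gives $\sigma(H_\infty)=\{L_n\}_{n\ge0}$ and exhibits $L_n$ as the $(n+1)$-th eigenvalue. Finally $L_n\ge0$: the PT-symmetry of $H_\infty$ (as in (\ref{ptsym})) forces $L_n$ real, and testing $H_\infty\varphi=L_n\varphi$ against $\varphi$ with integration by parts (the boundary terms vanishing by subdominance) yields $\Re(L_n)\,\|\varphi\|^2=\int|\varphi'|^2\ge0$.

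\textbf{Main obstacle.} The delicate point is Step 3: ensuring that the $n$ nodes of $\psi_{n,\beta}$ survive the complex rotation $\beta^{-1/10}$ — whose argument reaches $\pm\pi/10$ and is most extreme as $\arg\beta\to\pm\pi$, that is, on the cut — and are still counted by the node theory of the limit operator $H_\infty$, together with establishing the clean bijective labelling of $\sigma(H_\infty)$ by the node number and the uniformity of all the WKB estimates up to $\arg\beta=\pm\pi$.
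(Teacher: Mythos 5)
Your proposal is correct and follows essentially the same route as the paper: the rescaling $x=\beta^{-1/10}y$ (your $K(\beta)$ is the paper's $\check H(\alpha)$ with $\alpha=\beta^{-2/5}$), a Proposition~\ref{bounded}-style complex WKB node count to bound $\beta^{-1/5}E_n(\beta)$, norm-resolvent convergence to $H_\infty$, and stability of the node number to identify the limit $L_n$ and order the spectrum of $H_\infty$. One caution on your final step: PT-symmetry of $H_\infty$ by itself does \emph{not} force $L_n\in\R$ (that is the whole point of the Bessis--Zinn Justin problem) -- it only works combined with your simplicity-plus-node-labelling claim, and the cleaner route, which is the paper's, is to let $\alpha=\beta^{-2/5}\to 0$ along $\beta>0$, where $\alpha^{1/2}E_n(\alpha^{-5/2})$ is already known to be real positive by (\ref{ptsym}) and the eigenfunction carries exactly $n$ nodes, so that $L_n\geq 0$, is simple, and is the $(n+1)$-th eigenvalue.
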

\begin{proof}
The complex change of variable $x=\beta^{-\frac1{10}}y$ transforms $H(\beta)$ into,
$$
\hat H (\beta):= \beta^{1/5}(-\frac{d^2}{dy^2} + \beta^{-2/5}y^2 + iy^3).
$$
Moreover, as in Section \ref{seccut}, we see that the eigenfunction $\psi_{n,\beta}$ associated with $E_n(\beta)$ is transformed into a function in the domain of the operator. As a consequence, setting,
$$
\check H(\alpha):=-\frac{d^2}{dy^2} + \alpha y^2 + iy^3,
$$
we see that, for $|\arg\alpha|\leq 2\pi/5$, the spectrum of $\check H(\alpha)$ consists of the eigenvalues $\alpha^{1/2}E_n(\alpha^{-5/2})$, $n=0,1,\dots$, with eigenfunctions $\varphi_{n,\alpha}(y):=\psi_{n,\alpha^{-5/2}}(\alpha^{1/4}y)$ admitting exactly $n$ zeros in $\{ -\pi\leq \arg y+ \frac14\arg\alpha\leq 0\}$ (nodes). 
In addition, for $\alpha\in\C$ small enough, $\check H(\alpha)$ forms an analytic family of type A with compact resolvent (see, e.g., \cite{CGM}), and is a small perturbation of $\check H(0) = -\frac{d^2}{dy^2} + iy^3$ (in the sense that it tends to $\check H(0)$ in the norm resolvent sense as $\alpha\to 0$).
\vskip 0.3cm
Then, by arguments similar to (but, somehow, simpler than) those used in the proof of Proposition \ref{bounded}, we see that $\alpha^{1/2}E_n(\alpha^{-5/2})$ remains bounded as $\alpha\to 0$, $|\arg\alpha|\leq 2\pi/5$. Moreover, any limit value is an eigenvalue of $\check H(0)$, and since its spectrum is discrete, we conclude that $\alpha^{1/2}E_n(\alpha^{-5/2})$ necessarily admits a limit $L_n\in\sigma (\check H(0))$ as $\alpha\to 0$, $|\arg\alpha|\leq 2\pi/5$. Moreover, any eigenvalue of $\check H(0)$ gives rise, for $\alpha$ small, to some eigenvalue of $\check H(\alpha)$.\\

Then, specifying the way in which $\alpha$ tends to zero by taking $\alpha >0$, and using the fact that, in this case, $\alpha^{1/2}E_n(\alpha^{-5/2})$ is a positive number and its associated eigenfunction admits exactly $n$ nodes, we first obtain that $L_n\geq0$, then that it is simple, and finally that it is the $(n+1)$-th eigenvalue of $\check H(0)$.
\end{proof}
\section{Pad\'e Summability}

Now, we come to the last part of the proof of Theorems \ref{mainth} and \ref{mainth2}. We first have,
\begin{proposition}\sl
\label{stieltfunc}
For $n\geq 0$ fixed and $\beta\in\C_c$, set,
$$
F_n(\beta):= \frac{E_n(\beta) -E_n(0)}{\beta}.
$$
Then, $F_n$ is a Stieltjes function, that is, more precisely, it can be written on the form,
$$
F_n(\beta) =\int_0^{+\infty}\frac{\rho_n(t)}{1+\beta t} dt,
$$  
where $\rho_n$ is a  real-analytic positive function such that, for all $N\geq 0$, one has  $t^N\rho_n(t) \in L^1(\R_+)$.
\end{proposition}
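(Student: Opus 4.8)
The plan is to recover $F_n$ from its boundary values on the cut by a dispersion relation, that is, a Cauchy integral taken around $(-\infty,0]$, and to read off the density $\rho_n$ directly from the jump of $E_n$ across the cut; the outcome I expect is $\rho_n(t)=\frac1\pi\Im E_n^+(-1/t)$, which is manifestly positive by Proposition \ref{behavcut}. First I would record the three analytic facts that feed the contour argument. By Proposition \ref{behavinfty}, $E_n(\beta)\sim L_n\beta^{1/5}$ as $|\beta|\to\infty$, so $F_n(\beta)=(E_n(\beta)-E_n(0))/\beta=O(|\beta|^{-4/5})$ vanishes at infinity; near $\beta=0$ the expansion (\ref{calicbis}) shows $F_n(\beta)\to e_{n,1}$, so $F_n$ stays bounded at the tip of the cut. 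By Proposition \ref{behavcut}, $E_n$ extends continuously (indeed analytically in $b$) up to each side of the cut, with boundary values $E_n^\pm(-b)$, and the symmetry (\ref{ptsym}) gives $E_n^-(-b)=\overline{E_n^+(-b)}$, hence a jump $E_n^+(-b)-E_n^-(-b)=2i\Im E_n^+(-b)$ across $(-\infty,0)$.

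Next I would apply Cauchy's formula on a contour hugging the cut $(-\infty,0]$ and closing on a large circle. The large-circle term is controlled by $F_n=O(|\beta|^{-4/5})$ and drops out, while the boundedness of $F_n$ at $0$ makes the small indentation around the tip contribute nothing in the limit, so that only the two lips of the cut survive. This produces, for $\beta\in\C_c$,
$$F_n(\beta)=\frac1\pi\int_0^{+\infty}\frac{\Im E_n^+(-b)}{b(b+\beta)}\,db .$$
The substitution $t=1/b$ then turns this into the announced Stieltjes form $F_n(\beta)=\int_0^{+\infty}\frac{\rho_n(t)}{1+\beta t}\,dt$ with $\rho_n(t)=\frac1\pi\Im E_n^+(-1/t)$.

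It remains to check the stated properties of $\rho_n$. Positivity is exactly $\Im E_n^+(-b)>0$ from Proposition \ref{behavcut}; real-analyticity on $(0,+\infty)$ follows from the analytic dependence of the boundary eigenvalue $E_n^+(-b)$ on $b$ (Proposition \ref{behavcut}) composed with $t\mapsto 1/t$. For the integrability $t^N\rho_n\in L^1(0,+\infty)$ I would treat the two ends separately. As $t\to 0_+$, i.e. $b=1/t\to+\infty$, Proposition \ref{behavinfty} gives $\Im E_n^+(-b)\sim L_n\sin(\pi/5)\,b^{1/5}$, so $\rho_n(t)\sim c\,t^{-1/5}$ with $c=\frac{L_n}\pi\sin(\pi/5)\ge 0$, which is integrable near $0$ (and incidentally yields the limit of $t^{1/5}\rho_n$ used in Theorem \ref{mainth2}). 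As $t\to+\infty$, i.e. $b\to 0_+$, the coefficients of the expansion (\ref{calicbis})--(\ref{anasymb}) are real, so formally $\Im E_n^+(-b)$ vanishes to all orders, and $\Im E_n^+(-b)=o(b^N)$ for every $N$; hence $\rho_n(t)=o(t^{-N})$ decays faster than any power and all the weighted integrals converge.

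The main obstacle is this last decay estimate together with the rigorous justification of the contour collapse. Establishing $\Im E_n^+(-b)=o(b^N)$ requires that the real Rayleigh--Schr\"odinger expansion genuinely control the boundary value $E_n^+(-b)$ as $b\to 0_+$, and this is precisely where the (distributional, on $\arg\beta=\pm\pi$) Borel summability results of \cite{CGM, C} must be invoked. One must also verify that $F_n$ is continuous up to both lips of the cut and that the cut integral converges at its two ends — which are exactly the regimes $b\to+\infty$ and $b\to 0_+$ analyzed above — so that Cauchy's theorem legitimately reduces the contour to the cut and the representation above is valid for every $\beta\in\C_c$.
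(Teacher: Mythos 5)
Your proof follows the paper's argument essentially verbatim: a Cauchy integral representation deformed onto the cut (justified at infinity by Proposition \ref{behavinfty}, which gives $F_n(z)={\mathcal O}(|z|^{-4/5})$, and at the lips by Proposition \ref{behavcut}), the jump computed via the symmetry (\ref{ptsym}) as $2i\Im E_n^+(-b)$, the substitution $t=1/b$ yielding $\rho_n(t)=\frac1{\pi}\Im E_n^+(-1/t)>0$, and the same two-ended integrability analysis ($\rho_n(t)\sim c\, t^{-1/5}$ as $t\to 0_+$ from Proposition \ref{behavinfty}, rapid decay as $t\to+\infty$ from the reality of the coefficients $e_{n,k}$). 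The only cosmetic difference is that for the decay at infinity you invoke the distributional Borel summability results of \cite{CGM,C}, whereas the paper deliberately avoids these and uses only the validity of the asymptotic expansion (\ref{calicbis}) with real coefficients as $\beta\to 0$ in $\C_c$ up to the boundary, which is all that is needed to get $\Im E_n^+(-b)=o(b^N)$ for every $N$.
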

\begin{proof} Since $F_n$ is holomorphic on $\C_c$, for any $\beta\in\C_c$ we have,
$$
F_n(\beta ) = \frac1{2i\pi}\oint_\gamma \frac{F_n(z)}{z-\beta} dz,
$$
where $\gamma$ is an arbitrary simple loop in $\C_c$ surrounding $\beta$ (and positively oriented). Moreover, if $|z|=R>>|\beta|$ and $|\arg z|< \pi$, by Proposition \ref{behavinfty}, we have,
$$
\left| \frac{F_n(z)}{z-\beta}\right| \leq C_n \frac{R^{-4/5}}{R-|\beta|}
$$
with $C_n >0$ constant, and thus,
$$
\int_{|z|=R, |\arg z|<\pi}\left|\frac{F_n(z)}{z-\beta}\right| \,| dz| \longrightarrow 0 \quad \mbox{as }\,\, R\to +\infty.
$$
Therefore, we can first deform $\gamma$ up to a contour $\tilde \gamma = \gamma_+ \cup \gamma_-$, where $\gamma_+$ follows $(-\infty, 0]$ on the side $\{\Im z > 0\}$ (and is oriented from $-\infty$ to $0$), while $\gamma_-$ follows $(-\infty, 0]$ on the side $\{\Im z < 0\}$ (and is oriented from $0$ to $-\infty$): see Figure 2. 
\begin{figure}[h]
\label{fig2}
\vspace*{0ex}
\centering
\includegraphics[width=0.6\textwidth,angle=0]{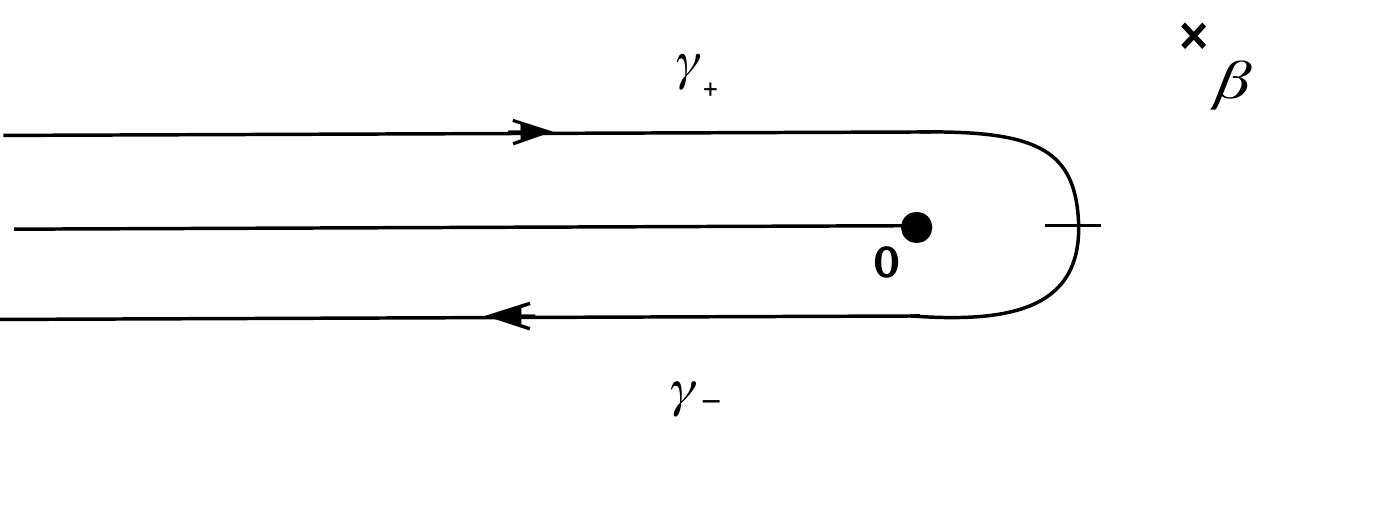}

\vspace*{0ex}
\caption{}
\end{figure}

Then, using Proposition \ref{behavcut}, we can take the limit where both $\gamma_+$ and $\gamma_-$ become the interval $(-\infty,0]$, and we obtain,
$$
F_n(\beta )  = \frac1{2i\pi}\int_{-\infty}^0\frac{F^+_n(z)-F^-_n(z)}{z-\beta} dz,
$$
where we have used the notation,
$$
F_n^{\pm}(z):= \frac{E_n^{\pm}(z) -E_n(0)}{z}\quad (z\in (-\infty,0]).
$$
Therefore, setting $t:= -z^{-1}$, this gives us,
$$
F_n(\beta) =\int_0^{+\infty}\frac{\rho_n(t)}{1+\beta t} dt
$$
with,
$$
\rho_n(t):=\frac1{2i\pi t}(F_n^-(-\frac1t)-F_n^+(-\frac1t))=\frac1{2i\pi}(E_n^+(-\frac1t)-E_n^-(-\frac1t)).
$$
By  (\ref{ptsym}), we also have $E_n^-(-\frac1t) =\overline{E_n^+(-\frac1t)}$, and thus, using Proposition \ref{behavcut} again, we obtain,
\be
\label{rhon}
\rho_n(t) =\frac1{\pi}\Im E_n^+(-\frac1t) >0.
\ee
In particular, the real-analyticity of $\rho_n$ is a direct consequence of Proposition \ref{behavcut} and its proof. \\

Finally, the fact that $t^N\rho_n(t)\in L^1(\R_+)$ comes from Proposition \ref{behavinfty} and (\ref{calicbis}). Indeed, they tell us that $\rho_n(t)\sim t^{-1/5}$ as $t\to 0_+$, and $\rho_n(t)={\mathcal O}(t^{-\infty })$ as $t\to +\infty$.
\end{proof}

Now, we are able to complete the proof of Theorem \ref{mainth}. By Proposition \ref{stieltfunc}, as $\beta\to 0_+$,  $F_n(\beta)$ admits the asymptotic expansion,
$$
F_n(\beta)\sim \sum_{j\geq 0}a_{n,j} (-\beta)^j,
$$
with,
\be
\label{rhosol}
a_{n,j} := \int_0^{+\infty} t^j \rho_n(t) dt\,\,  (>0).
\ee

On the other hand, by (\ref{calicbis}), we also have $a_{n,j} = |e_{n,j+1}|$, and thus, by (\ref{anasymb}),
$$
a_{n,j}\leq D_n C_n^{j+1}(j+1)!.
$$
In particular,
$$
\sum_{j=1}^{+\infty} (1/a_{n,j})^{1/2j} = +\infty.
$$
This means that the criterion of Carleman Theorem (see \cite{Wa}, Theorem 88.1, page 330) is satisfied, and thus the (possible) solution $d\mu$ ($=$ non negative measure on $\R_+$) of the moment problem,
$$
 \int_0^{+\infty} t^j d\mu (t) =a_{n,j}
$$
is unique. But, by (\ref{rhosol}), we already know that such a measure exists, and is given by $d\mu (t) =\rho_n(t)\, dt$. At this point, we can apply the results of \cite{Wa}, Chapter XIX (in particular, Theorem 97.1), together with \cite{St1}, Chapters VII and VIII (in particular \S 47,48, 51 and 54). They tell us that the series $\sum_{j\geq 0}a_{n,j} z^j$ is Stieltjes (see \cite{Wa}, \S 97), that the diagonal Pad\'e approximants $(P_{n,j}(\beta), Q_{n,j}(\beta))_{j\geq 0}$ of $\sum_{j\geq 0}a_{n,j} (-\beta)^j$, with $Q_{n,j}(0)=1$, exist and are unique, that they have no zeros in $\C_c$, and that, for all $\beta\in\C_c$, the quotient $P_{n,j}(\beta)/Q_{n,j}(\beta)$ tends to $\int_0^{+\infty}\frac{\rho_n(t)}{1+\beta t} dt= F_n(\beta )$, as $j\to +\infty$.
\vskip 0.3cm 
Hence, Theorem \ref{mainth} is proved, together with the first part of Theorem \ref{mainth2}. It remains to estimate $\ln \rho_n(t)$ as $t\to +\infty$. We observe that, by (\ref{rhon}), this is equivalent to find an estimate on $\Im E_n^+(-b)$ as $b\to 0_+$. But then, the general theory of \cite{HS} tells us that,
$$
\Im E_n^+ (-b) = p_nb^{-q_n}e^{-A/b}(1+o(1)) \quad(b\to 0_+)
$$
where $p_n, q_n$ are positive numbers independent of $b$, and $A$ corresponds to the tunneling through the barrier, and is given by,
$$
A=2\int_0^{1}\sqrt{y^2-y^3}\, dy= \frac{8}{15}
$$
(observe that the change of variable $x=b^{-1/2}y$ transforms the operator $-d_x^2+x^2-\sqrt{b}x^3$ into $\frac1{b}[-b^2d_y^2+y^2-y^3]$, leading to a semiclassical problem as $b\to 0_+$). Then, the asymptotics  of $\rho_n(t)$ at $+\infty$  immediately follows by taking the logarithm. \\

Concerning the value of $C_n$ in (\ref{anasymb}), it also follows by writing,
\begin{eqnarray*}
|e_{n,k+1}| &=& \int_0^{+\infty} t^k\rho_n(t)dt = \int_0^{+\infty}p_nt^{k+q_n}e^{-At}dt +R_n(k) \\
&=& \frac{p_n}{A^{k+q_n}}\Gamma (k+q_n+1)+R_n(k) ,
\end{eqnarray*}
where, for any $C>0$,  $R_n(k)$ can be written as,
$$
R_n(k) = \int_0^C{\mathcal O}(t^k)dt + \int_C^{+\infty} o(t^{k+q_n})e^{-At}dt,
$$
where the estimates are uniform with respect to $k$. Hence, we have $R_n(k) =o(A^{-k}\Gamma (k+q_n+1))$ as $k\to +\infty$, and the result follows.
$\square$

\section{Appendix}

Here, we give a proof of Lemma \ref{estphi'surphi}.
\vskip 0.3cm
For any $C>0$ large enough, we set,
$$
\Omega_C(h):= \{ x\in\C\,; |\varphi' (x) | > \frac{C}{h} |\varphi (x) |\}\cap \Omega_0,
$$
where $\Omega_0$ is some fixed arbitrarily large bounded open subset of $\C$.
Then, $\Omega_C(h)$ is open and $\varphi'(x)\not = 0$ on $\Omega_C(h)$. \\

Our purpose is to prove that, for any $\delta >0$ fixed, if $C$ is sufficiently large, then $\Omega_C(h)\subset \{\dist (x,\varphi^{-1}(0))<\delta h\}$.\\

For $x\in \Omega_0$ such that $\varphi'(x)\not=0$, we define,
$$
u(x):=  \frac{\varphi(x)}{\varphi'(x)}.
$$
We have,
$$
u' = 1-\frac{\varphi'' \varphi}{\varphi^2} = 1-\frac{W}{h^2}u^2,
$$
where $W=W_h(x)$ is bounded together with all its derivatives on $\Omega_0$, uniformly with respect to $h$. In particular, since $|u|\leq h/C$ on $\Omega_C(h)$, on this set we obtain,
$$
|u'-1|\leq \frac{C_0}{C^2},
$$
with $C_0:=\sup_{\Omega_0}|W|$. 

Now, let $x_0=x_0(h)\in \Omega_C(h)$  arbitrary, and set,
\begin{eqnarray*}
&&v(t):= \frac1{h}u(x_0+th)\quad ;\quad f(t):= t-v(t);\\
&&\tilde\Omega_C:= \{ t\in \C\, ;\, x_0+th\in\Omega_C\}=\{t\, ;\, |v(t)|<\frac1{C}\}.
\end{eqnarray*}
The previous estimates give,
\be
\label{v'v''}
|v'(t)-1|=|f'(t)|={\mathcal O}(\frac1{C^2}) \,\mbox{ on } \tilde\Omega_C,
\ee
uniformly with respect to $C$ and $h$.\\

In order to prove that, for $C$ large enough, $x_0$ is distant less than $\delta h$ from $\varphi^{-1}(0)$, we plan to apply the fixed-point theorem to $f(t)$ on the open set ${\mathcal V}_C:=\{ |t|\leq 2/C\}$. Thus, we first have to prove that $f$ sends this set into itself. For  $\mu\in[0,2]$, let us set,
$$
S_\mu := \sup_{|t|\leq \mu/C}|v(t)|.
$$
Since $v' = 1-Wv^2$, for all $t$ such that $|t|\leq \mu/C$, we have,
$$
|v(t)|\leq |v(0)| + \frac{\mu}{C}(1+C_0S_\mu^2)
$$
and thus,
$$
S_\mu \leq \frac3{C}+\frac{2C_0}{C}S_\mu^2.
$$
Moreover, $S_\mu$ depends continuously on $\mu$, and $S_{\mu=0}=|v(0)|\leq 1/C$. Hence, for $C$ is sufficiently large, we necessarily have,
$$
S_\mu \leq \frac{C}{4C_0}(1-\sqrt{1-24(C_0/C^2)})< \frac{12}{C}.
$$
In particular, for $\mu=2$, this means that ${\mathcal V}_{C} \subset \tilde\Omega_{C/12}$, and  thus, by (\ref{v'v''}), we have $|f'|={\mathcal O}(1/C^2)$ on this set. As a consequence, if $t\in{\mathcal V}_C$, one has,
$$
|f(t)| =|f(0)|+{\mathcal O}(\frac{|t|}{C^2})\leq \frac1{C}+{\mathcal O}(\frac{1}{C^3}),
$$
and thus, for $C$ large enough,
$$
|f(t)| <\frac2{C}.
$$
This proves that $f$ sends ${\mathcal V}_{C}$ in itself. In addition, by (\ref{v'v''}) (and the fact that ${\mathcal V}_{C} \subset \tilde\Omega_{C/12}$), $f$ is also a contraction on ${\mathcal V}_{C}$. As a consequence, it admits a  fixed point in this set, and this means that there exists a zero of $\varphi$ distant from $x_0$ less that $2h/C$.
$\square$

\end {document}